\documentclass[11pt]{article}

\setlength{\parindent}{0pt}
\setlength{\parskip}{0.5em} 

\usepackage{fullpage}
\usepackage{mathtools}

\usepackage{hyperref}       %
\usepackage{url}            %
\usepackage{amsfonts}       %
\usepackage{xcolor}         %

\title{Bayesian Perspective on Memorization and Reconstruction}

\author{%
Haim Kaplan\thanks{Tel Aviv University and Google Research. Supported in part by ISF grant 1156/23 and the Blavatnik Research Foundation.}
\and
Yishay Mansour\thanks{Tel Aviv University and Google Research. This project has received funding from the European Research Council (ERC) under the European Union’s Horizon 2020 research and innovation program (grant agreement No. 882396), by the Israel Science Foundation, the Yandex Initiative for Machine Learning at Tel Aviv University and a grant from the Tel Aviv University Center for AI and Data Science (TAD).}
\and
Kobbi Nissim\thanks{Georgetown University and Google Research. Work is partially funded by NSF grant No.\ 2217678 and by a gift to
Georgetown University.}
\and
Uri Stemmer\thanks{Tel Aviv University and Google Research. Supported in part by ISF grant 1419/24 and the Blavatnik Research Foundation.}
}


\allowdisplaybreaks

\newcommand{\remove}[1]{}

\usepackage{stackrel}

\usepackage[most]{tcolorbox}
\usepackage{caption}

\usepackage{natbib}
\usepackage{tcolorbox}
\usepackage{algorithm}

\usepackage{scalerel}
\usepackage{mathrsfs}
\usepackage{enumitem}
\usepackage{framed}
\usepackage{bbm}
\usepackage{tikz}
\usepackage{float}
\usetikzlibrary{backgrounds}
\usepackage{color}
\usepackage{graphicx}
\usepackage{latexsym}
\usepackage{amsfonts}
\usepackage{amsmath, amssymb, amsthm}
\usepackage{multirow}
\usepackage{dsfont}
\usepackage{array}
\usepackage{hyperref}
\usepackage{adjustbox}

\newcommand{\smallminus}{\scalebox{0.75}[1.0]{$-$}}

\newcommand{\BN}
{\text{Bayesian extraction-safe}}

\def\restrict#1{\raise-.5ex\hbox{\ensuremath|}_{#1}}

\DeclareSymbolFont{AMSb}{U}{msb}{m}{n}
\DeclareMathSymbol{\N}{\mathbin}{AMSb}{"4E}
\DeclareMathSymbol{\Z}{\mathbin}{AMSb}{"5A}
\DeclareMathSymbol{\R}{\mathbin}{AMSb}{"52}
\DeclareMathSymbol{\Q}{\mathbin}{AMSb}{"51}
\DeclareMathSymbol{\erert}{\mathbin}{AMSb}{"50}
\DeclareMathSymbol{\I}{\mathbin}{AMSb}{"49}
\DeclareMathSymbol{\C}{\mathbin}{AMSb}{"43}

\newcommand{\mynote}[2]{{\textcolor{#1}{ #2}}}
\definecolor{gray}{gray}{0.4}
\newcommand{\gray}[1]{\mynote{gray}{{\scriptstyle #1}}}

\newtheorem{theorem}{Theorem}[section]

\newtheorem{lemma}[theorem]{Lemma}

\newtheorem{definition}[theorem]{Definition}

\newtheorem{remark}[theorem]{Remark}

\newtheorem{claim}[theorem]{Claim}

\newtheorem{observation}[theorem]{Observation}

\newtheorem{example}[theorem]{Example}

\newcommand{\1}{\mathbbm{1}}

\newcommand{\AAA}{\mathcal A}

\newcommand{\DDD}{\mathcal D}

\newcommand{\MMM}{\mathcal M}

\newcommand{\XXX}{\mathcal X}

\newcommand{\eps}{\varepsilon}

\newcommand{\z}{\mathrm{z}}

\newcommand{\Lap}{\operatorname{\rm Lap}}
\newcommand{\Hamming}{\operatorname{\rm Ham}}
\newcommand{\Ber}{\operatorname{\rm Ber}}
\newcommand{\SideInfo}{\operatorname{\rm SideInfo}}
\newcommand{\Nature}{\operatorname{\rm Nature}}
\newcommand{\Prior}{\operatorname{\rm Prior}}
\newcommand{\Tardos}{\operatorname{\rm Tardos}}

\newcommand{\average}{\operatorname{\rm Average}}

\newcommand{\poly}{\mathop{\rm poly}}

\newcommand{\noise}{\operatorname{\tt noise}}
\def\E{\operatorname*{\mathbb{E}}}

\def\Q{\operatorname*{\mathbb{Q}}}
\def\poly{\mathop{\rm{poly}}\nolimits}
\def\Lap{\mathop{\rm{Lap}}\nolimits}

\newcommand{\Avg}{\operatorname{\rm Average}}

\makeatletter
\newcommand{\subalign}[1]{%
  \vcenter{%
    \Let@ \restore@math@cr \default@tag
    \baselineskip\fontdimen10 \scriptfont\tw@
    \advance\baselineskip\fontdimen12 \scriptfont\tw@
    \lineskip\thr@@\fontdimen8 \scriptfont\thr@@
    \lineskiplimit\lineskip
    \ialign{\hfil$\m@th\scriptstyle##$&$\m@th\scriptstyle{}##$\hfil\crcr
      #1\crcr
    }%
  }%
}
\makeatother



\begin{document}

\date{May 29, 2025}
\maketitle

\begin{abstract}
\noindent We introduce a new Bayesian perspective on the concept of data reconstruction, and leverage this viewpoint to propose a new security definition that, in certain settings, provably prevents reconstruction attacks. We use our paradigm to shed new light on one of the most notorious attacks in the privacy and memorization literature -- fingerprinting code attacks (FPC). We argue that these attacks are really a form of {\em membership inference} attacks, rather than {\em reconstruction} attacks. Furthermore, we show that if the goal is solely to prevent reconstruction (but not membership inference), then in some cases the impossibility results derived from FPC no longer apply. 
\end{abstract}

\section{Introduction}

\citet{Carlini0EKS19} showed that it is sometimes possible to {\em extract} unique pieces of training data from modern language models (such as credit card numbers). This demonstrates that such models can unintentionally {\em memorize} rare parts of their training data, even if those parts appear only once. Since then, this {\em memorization phenomenon} has been studied in a long line of work, providing increasingly many examples in which modern models unintentionally memorize data. In fact, several follow-up papers have shown that there exist learning tasks for which memorization is {\em provably necessary} \citep{Feldman20,FeldmanZ20,carlini2021extracting,BrownBFST21,HaimVYSI22,BuzagloHYVONI23,CarliniHNJSTBIW23,CarliniIJLTZ23}.

However, these prior works did not converge on a single {\em definition of memorization}, and instead considered several context-dependent notions. Loosely speaking, existing definitions can be grouped into two main types: 

\begin{itemize}
\item {\bf Computational memorization.} Some papers, such as \citet{Carlini0EKS19}, define {\em memorization} in terms of {\em extraction}. Informally, a model $h$ is said to ``memorize'' a training example $x$ if $x$ can be extracted from $h$. This is in line with the cryptographic concept of {\em knowledge extraction} in proofs of knowledge: To demonstrate that Alice ``knows'' a fact there should be a way to extract that fact from her.

\item {\bf Statistical memorization.} Other works define {\em memorization} using statistical or information-theoretic criteria. For example, a model $h$ is said to ``memorize'' a training example $x$ if the mutual information between $h$ and $x$ is high. This is in line with the information-theoretic view of knowledge: A system is considered to possess information about a variable if observing the system reduces uncertainty about that variable.

\end{itemize}

In light of this, we ask the following questions:

\begin{tcolorbox}[boxrule=1pt]
\begin{center}
Is there a difference between {\em computational memorization} and {\em statistical memorization}? Could a model be considered ``memorizing'' under one definition but not the other?
\end{center}
\end{tcolorbox}

In this paper, we provide positive answers to the above questions. Specifically, we show that there are {\em well-established} examples of statistical memorization in which ``extraction'' is {\em provably impossible}. Specifically, 
\begin{enumerate}
    \item We put forward a new security definition, computational in nature, such that algorithms that satisfy it are guaranteed to resist extraction attacks.  
    \item We revisit problems studied in prior work for when statistical-memorization is provably necessary, and show that these problems can be solved while satisfying our new security notion (under the same conditions as studied in prior work). Therefore these examples separate statistical from computational memorization.
\end{enumerate}

This raises the question of whether ``statistical memorization'' should be considered as a form of ``memorization'' at all. Intuitively, if it is impossible for an attacker to {\em extract} information from the model, then had the model really {\em memorized} it?

\subsection{Existing definitions of statistical memorization}

Before describing our new results more formally, we survey some of the existing statistical definitions for memorization. 
\citet{BrownBFST21} defined memorization in terms of {\em conditional mutual information} as follows.

\begin{definition}\label{def:statMemo1}[\citet{BrownBFST21}]
    Let $\Prior$ be a meta-distribution (i.e., a distribution over distributions). Let $\Nature$ be a target distribution sampled from $\Prior$, and let $S$ be a sample from $\Nature$. Let $h\leftarrow\MMM(S)$ be a model obtained by running a learning algorithm $\MMM$ on the sample $S$. The amount of information that $h$ memorizes about $S$ is defined as $I(S ; h | \Nature)$.

\end{definition}

In this definition, the meta-distribution $\Prior$ is assumed to be known, but the specific target distribution $\Nature$ is a priori unknown. To interpret this definition, consider a learning algorithm $\MMM$ that given a sample $S$ is trying to learn properties of the underlying distribution $\Nature$, say learn a good predictor $h$. Now note that $I(S ; h | \Nature)$ captures the amount of information that $h$ encodes about $S$ {\em after excluding information about the target distribution itself}. Intuitively, this means that $I(S ; h | \Nature)$ captures the amount of information encoded in $h$ that is ``unique to the dataset''.

\citet{AttiasDHL024} considered a different definition for memorization, stated in terms of {\em membership inference attacks (MIA)} \citep{ShokriSSS17,yeom2018privacy}. Membership inference is a type of attack in which, given a model $h$ and a data point $x$, the attacker aims to determine whether $x$  was part of the training data underlying the model $h$. Formally,

\begin{definition}\label{def:statMemo2}[\citet{ShokriSSS17,yeom2018privacy,AttiasDHL024,VoitovychHAKDLR25}]
Let $\XXX$ be a data domain. 
Let $\MMM$ be an algorithm whose input is a dataset containing $n$ elements from $\XXX$. Algorithm $\MMM$ is {\em $(m\in\N,\xi\in(0,1))$-memorizing} if there is a distribution $\DDD$ over $\XXX$ and a distinguisher $\AAA$ such that
$$
\Pr_{\substack{S\leftarrow\DDD^n\\
    h\leftarrow\MMM(S)\\
    x\leftarrow S\\
    b\leftarrow\AAA(h,x)}}[b=\textsc{``In''}]\geq\frac{m}{n}
    \qquad\text{and}\qquad
\Pr_{\substack{S\leftarrow\DDD^n\\
    h\leftarrow\MMM(S)\\
    x\leftarrow\DDD\\
    b\leftarrow\AAA(h,x)}}[b=\textsc{``In''}]\leq\xi
    $$
\end{definition}

In words, the attacker $\AAA$ is executed on $h\leftarrow\MMM(S)$ and on a data point $x\in\XXX$. If this $x$ is sampled uniformly from $S$, then there is a noticeable probability (at least $m/n$) that $\AAA$ correctly determines that $x\in S$. Intuitively this happens if $h$ ``memorizes'' $m$ out of the $n$ elements of $S$. On the other hand, when $x$ is instead sampled independently from $S$, then the probability that $\AAA$ falsely determines that $x\in S$ should be small (at most $\xi$).

\subsection{Warmup: Memorization can be ambiguous}

In this paper we show that there are {\em well-established} cases where memorization is ``formally ambiguous''. Before presenting these results, we motivate our discussion with an intuitive example for a mechanism whose memorization behavior is open to interpretation. 

\begin{example}\label{ex:xor}
Let $n,d\in\N$ be parameters where $d$ is even, let $\Nature$ be the uniform distribution over $\{0,1\}^d$, and let $S=(x_1,\dots,x_n)\leftarrow\Nature^n$ be a sample containing $n$ iid elements from $\Nature$. We think of $S$ as an $n\times d$ binary matrix. Consider the mechanism $\MMM$ that, given $S$, returns an $n\times \frac{d}{2}$ matrix $H$ where $H[i,j]=S[i,2j-1]\oplus S[i,2j]$.
\end{example}

In this example, we have that $I(S;H | \Nature)=nd/2$ (intuitively, this holds because given $H$ the number of choices for every pair of bits in $S$ drops from 4 to 2). So $\MMM$ memorizes a lot of information about $S$ according to Definition~\ref{def:statMemo1}. 
In addition, there is a trivial membership inference attacker $\AAA$ against this mechanism. Specifically, given the release $H$ and a row $x\in\{0,1\}^d$, the attacker $\AAA$ answers \textsc{``In''} iff $x$ fits one of the rows in $H$, i.e., there is a row $i\in[n]$ such that for every coordinate $j\in[d/2]$ we have $H[i,j]=x_{2j-1}\oplus x_{2j}$. Note that for every $x\in S$ we have that $\AAA(H,x)=\textsc{``In''}$. But
for a random $x\in\{0,1\}^d$ we have that $\Pr[\AAA(H,x)=\textsc{``In''}]\leq n\cdot 2^{-d/2}$. So $\MMM$ essentially ``memorizes'' all elements in $S$ according to Definition~\ref{def:statMemo2}; formally it is $(n,n{\cdot}2^{-d/2})$-memorizing. 

However, even given the release $H$, the marginal distribution on any single bit in $S$ is still completely uniform. Furthermore, even given $H$, no attacker  (even unbounded) can guess noticeably more than half of the coordinates of any single row in $S$. Therefore, while there is a strong sense in which $\MMM$ ``memorizes'' a lot of information (according to Definitions~\ref{def:statMemo1} and~\ref{def:statMemo2}), there is also a strong sense in which no attacker can extract training examples from $H$. It is therefore open to interpretation whether this should be considered memorization or not.

\subsection{Our security definition: arguing about the attacker's knowledge is crucial}

Example~\ref{ex:xor} is simple enough so that we could argue that ``extraction is impossible'' without formally defining what it means. This becomes much less clear when dealing with more involved algorithms/problems. 
In order to tackle this, we first introduce a formal security definition for preventing extraction, and then present algorithms satisfying our definition. Building up towards our security definition, we now argue that such a definition must take into account the attacker's side-information (or lack thereof) about the sample and/or the underlying distribution.

\begin{example}\label{ex:xor2}
Consider again Example~\ref{ex:xor}, where the mechanism $\MMM$ takes a dataset $S$ containing $nd$ random bits and releases $\frac{nd}{2}$ ``parity bits'' (denoted as $H$). As we mentioned, in this scenario no attacker can guess noticeably more than half of the bits of any single row in $S$.  
However, this argument clearly breaks if the attacker possesses enough side-information about the sample $S$. For example, suppose that the attacker knows a superset $\mathbb{S}$ of size $n^{10}$  that contains $S$ as well as additional $(n^{10} \hspace{1px}\smallminus\hspace{1px} n)$ independent samples from $\{0,1\}^d$. Now, if the attacker knows $\mathbb{S}$ and sees the release $H$, then it can easily identify all rows in $S$ with high confidence. Specifically, the attacker searches for all rows in $\mathbb{S}$ that fit perfectly to a row in $H$. This identifies all rows in $S$, and the probability of falsely identifying a row outside of $S$ is bounded by $2^{-d/2}\cdot\poly(n)$.    
\end{example}

But did the attacker really {\em extract} $S$ from the release $H$, or was it already encoded in the side-information available to it? 
 Taking this example to the limit, suppose that the attacker's side-information explicitly specifies a vector $x^*$ from the sample $S$. Here it is intuitively clear that if this attacker sees the release $H$ and then ``identifies'' that $x^*\in S$, then this is not a very successful attack as this information was {\em not} extracted from the release $H$ in any way. An alternative way to interpret this, is that we cannot hope to defend against ``extracting'' $x^*$ by attackers who already ``know'' that $x^*\in S$. What about attackers who know that there is a $99\%$ chance that $x^*\in S$? What about $10\%$? This suggests that we need some way of {\em quantifying} the amount of side-information the attacker possessed {\em before seeing the release} on the element that it chooses to reconstruct. Furthermore, we argue that there are cases where limiting the amount of side-information the attacker has about $S$ is insufficient, and we must also take into consideration the amount of side-information it has about the {\em underlying data distribution}. 

\begin{example}
Let us consider the following variant of Example~\ref{ex:xor2}. Suppose that the data distribution $\Nature$ is generated at random, by first randomly sampling a set $\mathbb{S}\subseteq\{0,1\}^d$ of size $n^{10}$, and then defining $\Nature$ to be uniform over $\mathbb{S}$. Now a sample $S$ of size $n$ is sampled from $\Nature$ and we compute $H\leftarrow\MMM(S)$, where $\MMM$ is the same mechanism from the previous two examples. As in Example~\ref{ex:xor2}, an attacker that knows $\mathbb{S}$ and sees the release $H$ can easily reconstruct all of $S$. But now the side information that the attacker possesses is more about the underlying distribution $\Nature$ rather than about the sample $S$, in the sense that conditioned on $\Nature$ the attacker does not hold any side-information.
\end{example}

This is actually reminiscent of the way some of the negative results are stated in the literature (for showing that memorization is necessary), e.g., by \citet{AttiasDHL024}: The attacker possesses a lot of side information;  not directly on the sample $S$, but rather through knowledge on the underlying distribution $\Nature$. 
Motivated by this, we advocate that in order for a ``reconstruction attack'' to be successful, the attacker needs to reconstruct an element $z$ which it had ``little'' or ``bounded'' knowledge about before seeing the release. 

In particular, this means that we need a way to argue about {\em lack of knowledge} in order to be able to designate an attack as successful. We model this via a Bayesian perspective: We assume that the underlying data distribution $\Nature$ is itself a random element, sampled from a family of possible data distributions $\Prior$. The dataset $S$ is then sampled from $\Nature$. A data analysis mechanism $\MMM$ is then executed on $S$ and outputs a release $H$. An attacker $\AAA$ who knows the meta-distribution $\Prior$ (but not $\Nature$ and $S$ directly) then sees the release $H$ and tries to extract information about $S$ form $H$. We will denote the output of $\AAA$ as $z$. 
Treating both $\Nature$ and $S$ as random elements allows us to model the ``lack of knowledge'' that the attacker has about them. 
We will say that the attacker is successful only if it does not have a lot of prior knowledge, through $\Nature$ and $S$, about $z$.

{\bf Ruling out trivial attacks.\;} 
Suppose for now that the attacker aims to identify an element $z$ such that $z\in S$ (we will later consider more general types of attacks). Consider the mechanism $\MMM_{\rm bot}$ that takes a dataset $S$, ignores it, and returns $\bot$. Clearly, this mechanism should be considered secure. However, as we assume that the attacker knows $\Prior$, it could be aware of ``trivial'' elements which appear in the sample $S$ w.h.p.\ even over sampling $\Nature$ from $\Prior$. Then, paradoxically, the attacker can ``extract'' these trivial elements from the empty release $\bot$. 

\begin{example}
Suppose that the meta-distribution $\Prior$ is defined via the following process for sampling $\Nature$. Sample a set $\mathbb{S}\subseteq\{0,1\}^d$ of size $n^{10}$ and return the distribution $\Nature$ that with with probability $0.5$ returns $\vec{0}$ and with probability $0.5$ returns a uniform element from $\mathbb{S}$. As the attacker knows $\Prior$, it could simply guess that $\vec{0}\in S$.
\end{example}

Motivated by this, in Sections~\ref{sec:vanilla} and~\ref{sec:sideInfo} we present a security definition, called {\em Bayesian Extraction-Safe}, that (informally) prevents an attacker from recovering an element $z\in S$ unless: (1) The attacker possessed ``too much'' side-information about $z$ before seeing the release; or
    (2) This $z$ is a ``trivial'' element which could be ``reconstructed'' even without seeing any release.

Our security definition builds on a prior security definition by \cite{CohenKMMNST25}, called {\em Narcissus resiliency}, and can be viewed as a Bayesian extension of it capturing ``lack of knowledge'' in the eyes of the attacker. 
We stress that our proposed definition is {\em not a privacy definition}. In particular, it does not prevent the attacker from learning potentially sensitive bits of information about the sample. The goal of our security definition is to protect against {\em extraction}, which we interpret as reconstructing large portions from the dataset or from a row in the dataset (this will be made formal in the actual definition). In other words, while a reconstruction attack clearly indicates a privacy breach, the other direction is not necessarily true: Not all privacy attacks should be considered as successful reconstruction attacks.

\subsection{Circumventing fingerprinting code attacks}

We use our security definition to shed new light on one of the most notorious attacks in the privacy and memorization literature -- fingerprinting code attacks (FPC). We argue that these attacks are really a form of {\em membership inference} attacks, rather than {\em reconstruction} attacks. Furthermore, we show that if the goal is solely to prevent reconstruction (but not membership inference), then in some cases the impossibility results derived from FPC no longer apply.

Specifically, we revisit the Tardos FPC, introduced by \citet{Tardos03}. This is one of the most influential and applicable FPC in the literature. In particular, this FPC is exactly what underlies the results of \citet{AttiasDHL024} who showed that there are Stochastic Convex Optimization (SCO) problems which necessitate memorization (in the terminology of Definition~\ref{def:statMemo2}). Our results show that under the same conditions studied by \citet{AttiasDHL024}, for which they proved that statistical-memorization is necessary, there actually exists a secure mechanism (satisfying our security definition) that provably prevents extraction. Therefore, our results show that with a different viewpoint, it becomes open to interpretation whether the results of \citet{AttiasDHL024} show that the SCO problems they studied really necessitate ``memorization'', or solely forces the algorithm to be susceptible to membership inference attacks.

\section{Preliminaries}

\cite{CohenKMMNST25} introduced a definition, called Narcissus resiliency, that aims to capture what it means for a system to be secure against reconstruction attacks. As ``reconstruction'' is context-dependent, the definition of Narcissus resiliency introduces a relation $R$, where $R(S, z) = 1$ if and only if $z$ is a ``valid reconstruction'' of $S$ (or of some data point in $S$). A simple example to keep in mind is $R(S,z) \triangleq \1_{\{z\in S\}}$.

To motivate the definition of Narcissus resiliency, consider the relation $R(S,z) \triangleq \1_{\{z\in S\}}$ and consider a distribution over images, such that every large enough dataset sampled from this distribution contains the Mona Lisa. In this case, if an attacker ``reconstructs'' the Mona Lisa from the outcome of our mechanism, then this is not a very meaningful attack, as this could also be done without accessing the outcome of our mechanism. The challenge is to distinguish between reconstructing something ``specific'' to the particular dataset versus reconstructing something generic. The key idea in the definition of Narcissus resiliency is its self-referential nature. Instead of trying to formally capture what is generic and what is not (which is very nuanced), the following definition places this decision on the attacker itself. Intuitively, the definition requires the attacker to come up with a valid reconstruction w.r.t.\ the actual dataset $S$ which would {\em not} be considered a valid reconstruction w.r.t.\ a fresh dataset $T$. The formal definition follows.

\begin{definition}[\cite{CohenKMMNST25}]\label{def:narc}
Let $\XXX$ be a data domain, let $\DDD$ be a distribution over datasets containing elements from $\XXX$, and let $R:\XXX^*\times\{0,1\}^*\rightarrow\{0,1\}$. Algorithm $\MMM$ is $(\eps,\delta,\DDD,R)$-Narcissus-resilient if for all attackers $\AAA$ it holds that
\begin{equation*}%
\underset{\substack{S\leftarrow\DDD\\y\leftarrow\MMM(S)\\z\leftarrow\AAA(y)}}{\Pr}[R(S,z)=1]\leq e^{\eps}\cdot \underset{\substack{S\leftarrow\DDD\\T\leftarrow\DDD\\y\leftarrow\MMM(S)\\z\leftarrow\AAA(y)}}{\Pr}[R(T,z)=1]+\delta.
\end{equation*}
\end{definition}

\paragraph{Notations.} %
For two multisets $A,B$ we have that $A\setminus B$ is defined element-wise by subtracting the multiplicity of each element in 
$B$ from its multiplicity in 
$A$ (but never going below zero). We write $a\geq\Omega(f)$ or $a\leq O(f)$ to mean that $a\geq c\cdot f$ or $a\leq c\cdot f$, respectively, for some global (unspecified) constant $c$.

\section{A Bayesian
analogue for Narcissus resiliency}\label{sec:vanilla}

As we mentioned in the introduction, in this work we advocate that in order to classify an attack as a ``successful reconstruction'' we must take into consideration the amount of side-information the attacker has (or lack thereof). Otherwise, it is unclear whether the attacker really extracted this reconstruction from the released model/statistics rather than from its side-information.
This is not captured by Definition~\ref{def:narc}. Thus, our starting point is to reformulate Definition~\ref{def:narc} using a Bayesian perspective, aiming to model the uncertainty the attacker has about the underlying data distribution.

\begin{definition}\label{def:vanilla}
Let $\XXX$ be a data domain and let $\Prior$ be a meta-distribution. That is, $\Prior$ is a distribution over distributions over datasets containing elements from $\XXX$. Let $R:\XXX^*\times\{0,1\}^*\rightarrow\{0,1\}$ be a reconstruction relation.
Algorithm $\MMM$ is {\em $(\eps,\delta,\Prior,R)$-\BN} if for all attackers $\AAA$ it holds that
\begin{equation}\label{eq:narcVanilla}
\underset{\substack{\Nature\leftarrow\Prior\\S\leftarrow\Nature\\y\leftarrow\MMM(S)\\z\leftarrow\AAA(y)}}{\Pr}[R(S,z)=1]\leq e^{\eps}\cdot \underset{\substack{\Nature\leftarrow\Prior\\S\leftarrow\Nature\\T\leftarrow\Nature\\y\leftarrow\MMM(S)\\z\leftarrow\AAA(y)}}{\Pr}[R(T,z)=1]+\delta.
\end{equation}
\end{definition}

The interpretation is as follows. 
Look at the left hand side of Equation (\ref{eq:narcVanilla}).
First $\Nature$ is drawn.
This is a distribution over datasets drawn from the meta-distribution $\Prior$.
Then we draw a dataset $S$ from $\Nature$.
The algorithm $\MMM$ trains a model $y$ on the dataset $S$, and finally the attacker $\AAA$ uses $y$ to reconstruct an element from $S$. The predicate $R$ determines if this reconstruction was successful or not. 
We say that $\MMM$ is {\em \BN} if the probability of a successful reconstruction from $S$ is not much larger than the probability of reconstructing an element from a fresh sample from $\Nature$ which we call $T$ on the right hand side of Equation (\ref{eq:narcVanilla}).

In this section, we construct an algorithm for the marginals problem (defined below) that is secure in the sense of Definition~\ref{def:vanilla} under interesting choices of the meta-distribution $\Prior$.

\begin{definition}
Let $n,d\in\N$ be parameters. Let $\MMM$ be an algorithm whose input is an $n\times d$ binary matrix. Algorithm $\MMM$ solves the $(n,d)$-marginals problem with error parameters $(\alpha,\beta)$ if for every input matrix $S$, with probability at least $1-\beta$, the algorithm returns a vector $y\in[0,1]^d$ satisfying $\| y - \Avg(S)\|_{\infty}\leq\alpha$, 
where $\Avg(S)\in[0,1]^d$ is the average of the rows of $S$.
\end{definition}

We show that the marginals problem can be securely solved (in the sense of Definition~\ref{def:vanilla}) under the following meta-distribution, which we refer to as the {\em Tardos-Prior}. This specific prior distribution is notoriously hard in the privacy and memorization literature, underlying many of the existing impossibility results.

\begin{definition}[Tardos-Prior]\label{def:Tardos}
Let $N\geq n\in\N$ and $d\in\N$ be parameters. We write $\Tardos(N,d,n)$ to denote the  meta-distribution defined by the following process. To sample a distribution form $\Tardos(N,d,n)$ do:
\begin{enumerate}[leftmargin=20px,topsep=0px,itemsep=-2px]
    \item Sample parameters $p_1,\dots,p_d\in[0,1]$ uniformly and independently.
    \item Generate an $N\times d\in\{0,1\}^{N\times d}$ matrix $C$ where each entry $C[i,j]$ is sampled independently from $\Ber(p_j)$.
    \item\label{step:UnifTardos} Return the uniform distribution over $S\subseteq C$ of size $n$. That is, the distribution of sampling $n$ rows from $C$ without repetitions.
\end{enumerate}

\end{definition}

We next show that the mechanism that returns the {\em exact average} is secure for Tardos-Prior. Note that we cannot hope to show this for every relation $R$. 
In particular the exact average is not secure w.r.t.\ the relation $R$ defined as $R(S,z)=1$ iff $z$ is the average of $S$. We show that it is secure for several natural choices for the relation $R$.

\subsection{Preventing near-exact reconstruction}

We start by considering attackers that aim to recover an element of $S$ to within a small Hamming distance. Specifically, we consider the following ``Hamming reconstruction relation'':

\begin{definition}
For $\gamma\in[0,1]$ let $R^H_{\gamma}$ be defined as follows. For a dataset $S\in(\{0,1\}^d)^n$ and an element $z\in\{0,1\}^d$ we have $R^H_{\gamma}(S,z)=1$ iff $\exists x\in S$ such that $\Hamming(x,z)\leq \gamma d$.
\end{definition}

Note that for $\gamma=0$ we have that $R^H_{\gamma}(S,z)=1$ iff $z\in S$.

\begin{lemma}\label{lem:averageVanilla}
Let $N\geq n\geq\Omega(1)$ and $d\geq\Omega(\log\frac{n}{\delta})$. 
The exact average is secure (in the sense of Definition~\ref{def:vanilla}) under Tardos-Prior with the relation $R^H_{\gamma}$ for every $\gamma\leq\frac{1}{25}$.
\end{lemma}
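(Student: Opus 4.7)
The plan is to establish the stronger fact that the left-hand side of (\ref{eq:narcVanilla}) is itself at most $\delta$, so the inequality holds for any $\eps\geq 0$ and the role of $T$ essentially vanishes. Concretely, I will show that for any (possibly randomized) attacker, the probability that $z=\AAA(\Avg(S))$ lies within Hamming distance $\gamma d$ of some row of $S$ is at most $\delta$. A union bound over the $n$ rows of $S$ reduces this to bounding $\Pr[\Hamming(x_1,z)\leq\gamma d]$ for a single fixed row $x_1$, with target $\leq\delta/n$.

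The technical heart of the argument is a clean description of the conditional law of $x_1$ given $y=\Avg(S)$. Under $\Tardos(N,d,n)$, exchangeability of the $N$ iid rows of $C$ implies that (marginally over $C$) the $n$ rows of $S$ are iid $\prod_j\Ber(p_j)$ given $p$. Hence the conditional density of $S$ given $p$ depends on $S$ only through its column sums $ny$, so $S\mid y,p$ is uniform over all $n\times d$ binary matrices with column sums $(ny_1,\dots,ny_d)$, and this uniform law does not depend on $p$. Under it, the columns are mutually independent, each column $j$ being a uniformly random size-$(ny_j)$ subset of the $n$ rows; consequently, the bits $x_{1,1},\dots,x_{1,d}$ of any fixed row are conditionally independent given $y$, each marginally $\Ber(y_j)$.

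With this structure in hand, two applications of Hoeffding finish the job. First, given $y$ and the attacker's $z=\AAA(y)$, $\Hamming(x_1,z)=\sum_j \1[x_{1,j}\neq z_j]$ is a sum of $d$ independent Bernoullis whose means total at least $\mu(y):=\sum_j\min(y_j,1-y_j)$, uniformly in the choice of $z$; Hoeffding gives $\Pr[\Hamming(x_1,z)\leq\gamma d\mid y]\leq\exp(-2(\mu(y)-\gamma d)_+^2/d)$. Second, to control $\mu(y)$, the summands $q_j:=\min(y_j,1-y_j)$ are iid across $j$, lie in $[0,1/2]$, and a direct calculation using that $n_j:=ny_j$ is marginally uniform on $\{0,1,\dots,n\}$ (integrating out $p_j\sim U[0,1]$) gives $\E[q_j]\geq 1/6$ for every $n\geq 2$, with equality at $n\in\{2,3\}$. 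A second Hoeffding bound then yields $\Pr[\mu(y)\leq d(1/6-s)]\leq\exp(-8s^2 d)$ for any $s>0$. Because $\gamma=1/25<1/6$, any choice $s\in(0,\,1/6-1/25)$ makes both the conditional (Hamming) and the outer ($\mu$) concentration errors of the form $\exp(-\Omega(d))$; so with $d\geq C\log(n/\delta)$ for a large enough constant $C$, the union bound over rows yields LHS${}\leq\delta$. The main obstacle is the structural step above --- verifying that, after marginalizing out $C$, the conditional law of $S$ given $y$ under $\Tardos(N,d,n)$ really does factor columnwise and yield bitwise independence within each row. Once this is in place, the rest is textbook concentration, and the slack between $1/25$ and $1/6$ comfortably absorbs any unoptimized constants.
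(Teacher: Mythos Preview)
Your proposal is correct and follows essentially the same approach as the paper: both show the LHS of (\ref{eq:narcVanilla}) is already at most $\delta$ by using the key structural fact that, conditioned on $y=\Avg(S)$, the columns of $S$ are independent uniform arrangements of the prescribed column sums, then applying a union bound over rows together with concentration. The only (minor) difference is that the paper buckets columns into ``lukewarm'' ones with $y_j\in[1/4,3/4]$ and counts those, whereas you work directly with $\mu(y)=\sum_j\min(y_j,1-y_j)$ via the clean observation that $ny_j$ is uniform on $\{0,\dots,n\}$; this is a slight streamlining but not a different route.
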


\begin{proof}
Let $\Nature\leftarrow\Tardos$ and let $S\leftarrow\Nature$ be a sample of size $n$. By the Chernoff bound, for $d\geq\Omega(\log\frac{1}{\delta})$ and $N,n\geq\Omega(1)$, with probability at least $1-\frac{\delta}{2}$, there are at least $d/5$ columns in $S$ whose Hamming weight is between $n/4$ and $3n/4$.\footnote{To see this, first observe that by the Chernoff bound, with probability at least $1-O(\delta)$, there are $\Omega(d)$ columns $j$ in $\Nature$ with $p_j\in\frac{1}{2}\pm\frac{1}{5}$.
When this is the case, assuming that $n\geq\Omega(1)$, then with probability at least $1-O(\delta)$, at least $\Omega(d)$ of these columns will have Hamming weight between $\frac{n}{4}$ and $\frac{3n}{4}$ in $S$.} 
We refer to such columns as {\em lukewarm}.
We proceed with the analysis assuming that there are at least $d/5$ lukewarm columns in $S$.

Let $y=\MMM(S)\in[0,1]^d$ denote the average of the rows in $S$. Note that conditioning on $y$ fixes the number of ones in every column in $S$, but the permutation in each column is still completely uniform and independent of the other columns. Therefore, for any choice of $z\leftarrow\AAA(y)$, by the Chernoff bound, assuming that $d\geq\Omega(\log\frac{n}{\delta})$, with probability\footnote{This probability is over choosing the permutation of each column in $S$.} at least $1-\frac{\delta}{2n}$, this $z$ agrees with the {\em first row} in $S$ on at most $\frac{4}{5}$-fraction of the lukewarm coordinates of $S$. As there are at least $d/5$ lukewarm coordinates, this means that the Hamming distance between $z$ and the first row in $S$ is at least $d/25$. By a union bound, the probability that such a $z$ is at Hamming distance at least $\frac{d}{25}$ from {\em all} the rows in $S$ is at least $1-\delta$. This shows that for any attacker $\AAA$ and for any $\gamma\leq\frac{1}{25}$ it holds that
$$
\underset{\substack{\Nature\leftarrow\Tardos\\S\leftarrow\Nature\\y\leftarrow\MMM(S)\\z\leftarrow\AAA(y)}}{\Pr}[R^H_{\gamma}(S,z)=1]\leq \delta.
$$
\end{proof}

Note that in this proof we did not leverage the RHS of Definition~\ref{def:vanilla}. One might hope that by taking the RHS into account, we might be able to avoid the requirement that $\gamma\leq\frac{1}{25}$. The (incorrect) idea is that  while the probability of identifying a vector $z$ that is $\gamma d$-close to $S$ gets larger when $\gamma$ increases, the RHS of Definition~\ref{def:vanilla} also increases. This however is not true, as for larger values of $\gamma$ the attacker could be extremely more likely to find a $z$ that is $\gamma d$-closer to $S$ than to $T$. We overcome this by considering a relaxed version of Definition~\ref{def:vanilla}.

\subsection{Bi-criteria variant of the definition}

Recall that Definition~\ref{def:vanilla} leverages a relation $R$ to determine if a reconstruction $z$ is valid, and the same relation $R$ is used in both sides of Inequality~(\ref{eq:narcVanilla}). This, however, could be considered too restrictive. For example, suppose that the dataset contains images, and there is an attacker that given the released model/statistics can reconstruct Bob's image to within 0.1\% error. Suppose also that without the release, matching this 0.1\% error is hard, but one can easily achieve an error of 0.11\%. So the attacker gains something from the release, but this gain is arguably quite weak. Should this be considered a successful reconstruction? 
We believe that both answers are valid. To accommodate a negative answer to this question, in the following definition we leverage {\em two} relations $R,\hat{R}$ instead of just one, in order to allow the criteria applied to the fresh dataset $T$ to be different from the one applied to $S$.

\begin{definition}\label{def:bi-vanilla}
Let $\XXX$ be a data domain and let $\Prior$ be a meta-distribution. That is, $\Prior$ is a distribution over distributions over datasets containing elements from $\XXX$. Let $R$ and $\hat{R}$ be reconstruction relations.
Algorithm $\MMM$ is {\em $(\eps,\delta,\Prior,R,\hat{R})$-\BN} if for all attackers $\AAA$ it holds that
\begin{equation}\label{eq:narc}
\underset{\substack{\Nature\leftarrow\Prior\\S\leftarrow\Nature\\y\leftarrow\MMM(S)\\z\leftarrow\AAA(y)}}{\Pr}[R(S,z)=1]\leq e^{\eps}\cdot \underset{\substack{\Nature\leftarrow\Prior\\S\leftarrow\Nature\\T\leftarrow\Nature\\y\leftarrow\MMM(S)\\z\leftarrow\AAA(y)}}{\Pr}[\hat{R}(T,z)=1]+\delta.
\end{equation}
\end{definition}

We show that the exact average is secure w.r.t.\ Tardos-Prior under the above definition. Specifically,

\begin{lemma}\label{lem:bi}
Let $N\geq 10n\geq\Omega(\log\frac{1}{\delta})$ and $d\geq\Omega(\log\frac{n}{\delta})$, and let $\gamma\in[0,1]$. 
The exact average is secure (in the sense of Definition~\ref{def:bi-vanilla}) for Tardos-Prior with the relations $R^H_{\gamma}$ and $R^H_{\hat{\gamma}}$ for $\hat{\gamma}=\left(1+O\left(  \sqrt{\frac{\ln\frac{d}{\delta}}{n}}  \right) \right) \gamma$. 
\end{lemma}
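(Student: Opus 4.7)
The plan is to split on the magnitude of $\gamma$ and complement Lemma~\ref{lem:averageVanilla} with a per-row coupling that exploits the bi-criteria slack $\hat{\gamma}-\gamma$. For $\gamma\leq 1/25$, Lemma~\ref{lem:averageVanilla} already bounds the left-hand side of~(\ref{eq:narc}) by $\delta$, so the bi-criteria conclusion holds trivially with $\eps=0$ and any $\hat{\gamma}\geq\gamma$ (since the right-hand side is non-negative). Thus the only substantive case is $\gamma=\Omega(1)$, and there the right-hand side must do real work.

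For $\gamma=\Omega(1)$, fix $\Nature\leftarrow\Tardos$ and consider the good event $E$ that $|y_j-p_j|\leq\Delta:=O(\sqrt{\ln(d/\delta)/n})$ for every $j\in[d]$. Since $N\geq 10n$, a Chernoff--Hoeffding bound for hypergeometric sampling gives $\Pr[E^c]\leq \delta/4$, absorbed into the additive $\delta$. The key structural fact from the proof of Lemma~\ref{lem:averageVanilla} is this: conditional on $y$, each row of $S$ is an independent vector whose $j$th coordinate is $\Ber(y_j)$; and (given just the parameters $p$) each row of $T$ is an iid $\Ber(p_j)^d$ vector. For each $i\in[n]$, I would couple the $i$th row $x^S_i$ of $S\mid y$ with a candidate $x^T_i$ via shared uniform randomness: draw $u^i_j\sim U[0,1]$ and set $x^S_{i,j}=\1[u^i_j<y_j]$, $x^T_{i,j}=\1[u^i_j<p_j]$. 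Then $\Hamming(x^S_i,x^T_i)$ is a sum of independent $\Ber(|y_j-p_j|)$'s with mean at most $d\Delta$, so a Chernoff bound together with a union bound over $i\in[n]$ gives $\Hamming(x^S_i,x^T_i)\leq\tau=O(d\Delta)$ simultaneously for all $i$, with probability at least $1-\delta/4$ (the logarithmic overhead is absorbed via $d\geq\Omega(\ln(n/\delta))$). In the regime $\gamma=\Omega(1)$, the slack $(\hat{\gamma}-\gamma)d=\Omega(\gamma\cdot d\cdot\sqrt{\ln(d/\delta)/n})=\Omega(d\Delta)$, so after tuning the hidden constant of $\hat{\gamma}$ we may assume $\tau\leq(\hat{\gamma}-\gamma)d$. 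The triangle inequality then yields: whenever some $x^S_i$ lies within $\gamma d$ of $z$, the coupled $x^T_i$ lies within $\hat{\gamma}d$ of $z$. Combining a row-level union bound on the $S$ side with a Bonferroni lower bound on the $T$ side -- and using that sampling $T$ without replacement from $C$ stochastically dominates iid $\Ber(p)^d$ sampling for the monotone event of landing in any fixed set (so $N\geq 10n$ is enough to control the discrepancy) -- yields $(\eps,\delta)$-\BN{} with $\eps=O(1)$.

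The main obstacle is the calibration in the final step: the coupling error $\tau=O(d\Delta)$ has to fit inside the bi-criteria slack $(\hat{\gamma}-\gamma)d$. Because the slack scales linearly in $\gamma$ while $\tau$ is $\gamma$-independent, the coupling argument alone only covers $\gamma=\Omega(1)$; this is precisely why the small-$\gamma$ regime must be peeled off via Lemma~\ref{lem:averageVanilla}. A secondary subtlety is the passage from the iid $\Ber(p_j)^d$ distribution appearing on the coupling side to the actual rows of $T\subseteq C$ sampled without replacement; the hypothesis $N\geq 10n$, together with the monotonicity remarked above (or a second-moment computation), makes this translation a low-order correction rather than a genuine loss.
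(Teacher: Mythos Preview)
Your overall plan matches the paper's: split on $\gamma$, invoke Lemma~\ref{lem:averageVanilla} for $\gamma\le 1/25$, and for $\gamma\ge 1/25$ show via concentration that a $\gamma d$-hit on $S$ transfers to a $\hat\gamma d$-hit on $T$. The gap is in the central structural claim. It is \emph{not} true that ``conditional on $y$, each row of $S$ is an independent vector whose $j$th coordinate is $\Ber(y_j)$.'' What the proof of Lemma~\ref{lem:averageVanilla} actually uses is that, conditional on $y$, the \emph{columns} of $S$ are independent, with column $j$ a uniformly random arrangement of exactly $n y_j$ ones; the rows are therefore dependent (entries within a column are negatively correlated). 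Your coupling $x^S_{i,j}=\1[u^i_j<y_j]$ with iid uniforms produces independent rows, which is the wrong object, so the simultaneous bound $\Hamming(x^S_i,x^T_i)\le\tau$ for all $i$ and the ensuing triangle-inequality transfer are asserted about the wrong random variable. A per-row marginal coupling is valid for a single $i$, but there is no evident way to glue these into a joint coupling of all rows of $S\mid y$ with $n$ iid $\Ber(\vec p)$ rows while keeping every pair within $\tau$. The later steps---the Bonferroni lower bound on the $T$ side and the stochastic-domination passage from iid $\Ber(\vec p)$ rows to actual rows of $T\subseteq C$ (which, conditioned on $y$, share randomness with $S$)---are also unsubstantiated, but the argument already breaks before reaching them.

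The paper avoids any row coupling. It realizes the joint law of $(S,T)$ by sampling index sets $I_S,I_T\subseteq[N]$ and decomposing $S=D_S\cup D_{ST}$, $T=D_T\cup D_{ST}$ with $D_{ST}$ the overlap; it then conditions on $\vec p,\vec p^S,\vec p^T,\vec p^{ST}$ and $k=|I_S\cap I_T|$. This conditioning fixes $y$ (hence $z$) and still leaves each column of $D_S$ and of $D_T$ a uniformly random permutation with a prescribed weight. For each $i$, the paper compares $\Hamming(z,s_i)$ and $\Hamming(z,t_i)$ \emph{directly}: their difference $\sum_j(X^T_{i,j}-X^S_{i,j})$ is a sum across the $d$ independent columns of $\{-1,0,1\}$-valued terms with means $w^T_j-w^S_j$, which on the good event $G_1$ (all $|p^S_j-p_j|,|p^T_j-p_j|\le O(\sqrt{\ln(d/\delta)/n})$) are uniformly small. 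Hoeffding plus a union bound over $i$ gives an event $G_2$ on which $\{\Hamming(z,s_i)\le\gamma d\}\subseteq\{\Hamming(z,t_i)\le\hat\gamma d\}$ for every $i$ in the non-overlap part, while the overlap event (some row of $D_{ST}$ close to $z$) is literally common to both sides. This is event containment, so the paper obtains $\eps=0$ with only an additive $O(\delta)$---no Bonferroni loss and no need to compare $T$ with iid $\Ber(\vec p)$ samples.
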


The proof is similar in spirit to the proof of Lemma~\ref{lem:averageVanilla}, but becomes much more technical as we need to argue about both sides of Inequality~(\ref{eq:narc}).

\begin{proof}[Proof of Lemma~\ref{lem:bi}]
By Lemma~\ref{lem:averageVanilla}, we already know that the exact average is secure whenever $\gamma\leq\frac{1}{25}$. We thus proceed with the analysis assuming that $\gamma\geq\frac{1}{25}$.

We consider the following process for sampling $S$ and $T$:
\begin{enumerate}
    \item Sample $p_1,\dots,p_d\in[0,1]$ uniformly and independently.
    \item Sample two sets of indices of size $n$ independently: $I_S\subseteq[N]$ and $I_T\subseteq[N]$.
    \item For every $i\in I_S\cup I_T$ sample $x_i\leftarrow\Ber(\vec{p})$ independently, where $\Ber(\vec{p})$ denotes the distribution over $d$-bit vectors where the $j$th bit is sampled independently from $\Ber(p_j)$.

    \item Define $D_{ST}=\{x_i : i\in I_S\cap I_T\}$, $D_{S}=\{x_i : i\in I_S\setminus I_T\}$, $D_{T}=\{x_i : i\in I_T\setminus I_S\}$.
    \item Return $S=D_S\cup D_{ST}$ and $T=D_T\cup D_{ST}$.
\end{enumerate}
Observe that the distribution on $(S,T)$ induced by this process is identical to the distribution induced by first sampling $\Nature\leftarrow\Prior$ and then sampling $S,T$ independently from $\Nature$.

For $j\in[d]$, let $p^{ST}_j,p_j^S,p^T_j$ denote the ``column probabilities'' in $D_{ST},D_S,S_T$, respectively. That is, $p^{ST}_j,p_j^S,p^T_j$ are the fraction of 1's in the $j$th column of $D_{ST},D_S,S_T$. We also denote $k=|I_S\cap I_T|$.  

Now fix $\vec{p},\vec{p}^{ST},\vec{p}^S,\vec{p}^T$, and $k$. This, in particular, fixes the average of $S$ and thus fixes the reconstructed element $z$.
We denote the elements of $D_S$ as $s_1,\dots,s_{n-k}$, the elements of $D_T$ as $t_1,\dots,t_{n-k}$, and the elements of $D_{ST}$ as $st_1,\dots,st_k$.

Let us start by calculating the probability that $\Hamming(z,s_1)\leq\gamma d$, where $s_1$ is the first row in $D_S$. 
    This Hamming distance is a random variable which is the sum of $d$ Bernoullis, where the parameter of the $j$th Bernoulli, denoted as $w^S_j$, is either $p^S_j$ or $1-p^S_j$, depending on $z[j]$. (The probability here is only over the permutation in each column of $D_S$.) Formally, let
    $$
    w^S_j= \begin{cases}
                        p^S_j   &,\quad\text{if } z[j]=0 \\
                        1-p^S_j\; &,\quad\text{if } z[j]=1
                    \end{cases}
    $$
We have that
\begin{align*}
\Pr[\Hamming(z,s_1)\leq\gamma d | \vec{p},\vec{p}^{ST},\vec{p}^S,\vec{p}^T,k]=\Pr\left[\sum_{j=1}^d \Ber\left( w^S_j  \right) \leq\gamma d \right]. 
\end{align*}
This holds analogously for elements of $D_T$ or $D_{ST}$ by replacing $w^S$ with $w^T$ or $w^{ST}$ (defined analogously to $w^S$).

 Next, the probability that $z$ matches {\em some} element in $S=D_S\cup D_{ST}$, i.e., the probability that $R^H_{\gamma}(S,z)=1$, can be written as
 {\small
\begin{align}
&\Pr[R^H_{\gamma}(S,z)=1 | \vec{p},\vec{p}^{ST},\vec{p}^S,\vec{p}^T,k]\nonumber\\
&=\Pr\left[\left. \left\{ R^H_{\gamma}(D_S,z)=1 \right\} \vee \left\{ R^H_{\gamma}(D_{ST},z)=1 \right\} \right| \vec{p},\vec{p}^{ST},\vec{p}^S,\vec{p}^T,k\right]
\nonumber\\
&=\Pr[\Hamming(z,s_1)\leq\gamma d \vee...\vee\Hamming(z,s_{n-k})\leq\gamma d
\vee \Hamming(z,st_1)\leq\gamma d\vee...\vee 
\Hamming(z,st_k)\leq\gamma d
| \vec{p},\vec{p}^{ST},\vec{p}^S,\vec{p}^T,k]\nonumber\\
&=\Pr\left[\left\{ \sum_{j=1}^d X^S_{1,j} \leq\gamma d  \right\}\vee...\vee \left\{ \sum_{j=1}^d X^S_{n-k,j} \leq\gamma d \right\}
\vee
\underbrace{
\left\{ \sum_{j=1}^d X^{ST}_{1,j} \leq\gamma d  \right\}\vee...\vee \left\{ \sum_{j=1}^d X^{ST}_{n-k,j} \leq\gamma d \right\}
}_{\text{call it Event ST}}
\right], \label{eq:SumBerS}
\end{align}
}
where each $X^S_{i,j}$ is a binary RV taking the value $1$ iff $z[j]\neq s_i[j]$,
and each $X^{ST}_{i,j}$ is a binary RV taking the value $1$ iff $z[j]\neq st_i[j]$. 
These Bernoullis are over the permutation within each column of $D_S$ or $S_{ST}$. Note that the RV's within each summation are independent. (However, these RV's  are {\em not} independent across different summations.) Also note that for every $i,j$, the marginal distribution over $X^S_{i,j}$ and $X^{ST}_{i,j}$ is $\Ber(w^S_j)$ and $\Ber(w^{ST}_j)$, respectively.

\medskip

Analogously, for $T=D_T\cup D_{ST}$ we have 
{\small
\begin{align}
&\Pr[R^H_{(1+\Delta)\gamma}(T,z)=1 | \vec{p},\vec{p}^{ST},\vec{p}^S,\vec{p}^T,k]\nonumber\\
&=\Pr\left[\left.\left\{R^H_{(1+\Delta)\gamma}(D_T,z)=1 \right\} \vee \left\{ R^H_{(1+\Delta)\gamma}(D_{ST},z)=1\right\} \right| \vec{p},\vec{p}^{ST},\vec{p}^S,\vec{p}^T,k\right]\nonumber\\
&\geq\Pr\left[\left.\left\{R^H_{(1+\Delta)\gamma}(D_T,z)=1 \right\} \vee \left\{ R^H_{\gamma}(D_{ST},z)=1\right\} \right| \vec{p},\vec{p}^{ST},\vec{p}^S,\vec{p}^T,k\right]\nonumber\\
&=\Pr\left[\left\{ \sum_{j=1}^d X^T_{1,j} \leq (1{+}\Delta)\gamma d  \right\}{\vee}...{\vee} \left\{ \sum_{j=1}^d X^T_{n-k,j} \leq (1{+}\Delta)\gamma d \right\}
{\vee}
\underbrace{
\left\{ \sum_{j=1}^d X^{ST}_{1,j} \leq\gamma d  \right\}{\vee}...{\vee} \left\{ \sum_{j=1}^d X^{ST}_{n-k,j} \leq\gamma d \right\}
}_{\text{Event ST}}
\right], \label{eq:SumBerT}
\end{align}
}
where each $X^T_{i,j}$ is a binary RV taking the value $1$ iff $z[j]\neq t_i[j]$. 
To connect between the expressions given in~(\ref{eq:SumBerS}) and~(\ref{eq:SumBerT}), let us consider the following two good events:

\begin{align*}
\text{Event } G_1:=& \text{ For every } j\in[d] \text{ it holds that } |p_j-p^S_j|\leq O\left(\sqrt{\frac{\ln\frac{d}{\delta}}{n}}\right)
    \text{ and }
    |p_j-p^T_j|\leq O\left(\sqrt{\frac{\ln\frac{d}{\delta}}{n}}\right).\\
\text{Event } G_2:=&
\text{ For every } i\in[n-k] \text{ it holds that}
\sum_{j=1}^d \left( X^T_{i,j} - X^S_{i,j} \right) \leq \Delta\gamma d.
\end{align*}

We will show that $G_1$ happens w.p.\ $1-O(\delta)$, and when this happens then $G_2$ happens w.p.\ $1-O(\delta)$. So overall $G_2$ happens w.o.p. Note that conditioned on $G_2$, for every $i$ we have that the event $\left\{ \sum_{j=1}^d X^S_{i,j} \leq\gamma d  \right\}$ is contained in the event $\left\{ \sum_{j=1}^d X^T_{1,j} \leq(1{+}\Delta)\gamma d   \right\}$, in which case 
the expression given in~(\ref{eq:SumBerT}) upper bounds the one in~(\ref{eq:SumBerS}), which basically implies the lemma. We will state this formally after showing that $G_1,G_2$ occur w.o.p. 

For $G_1$, note that as $N\geq 10n$ then by the Chernoff bound, for $n\geq\Omega(\log\frac{1}{\delta})$, with probability at least $1-\delta$, we have that $|D_S|=|D_T|\geq\frac{n}{2}$. When this is the case, then Event $G_1$ happens with  probability at least $1-O(\delta)$ by the Hoeffding bound (and a union bound). So $\Pr[G_1]\geq1-O(\delta)$.

We now show that for any fixture of $\vec{p},\vec{p}^{ST},\vec{p}^S,\vec{p}^T,k$ such that $G_1$ holds, then $G_2$ happens with probability at least $1-O(\delta)$. By a union bound, it suffices to show that 
$$\Pr\left[\left\{\sum_{j=1}^d \left( X^T_{1,j} - X^S_{1,j} \right) \leq \Delta\gamma d\right\}\right]\geq1-O\left(\frac{\delta}{n}\right).$$

This is a sum of $d$ independent RV's, each taking values in $\{-1,0,1\}$, each with expectation 
$
\E\left[X^T_{1,j}-X^S_{1,j}\right]=w^T_j-w^S_j\leq O\left(\sqrt{\frac{\ln\frac{d}{\delta}}{n}}\right)
$. This is because we fixed $\vec{p},\vec{p}^{ST},\vec{p}^S,\vec{p}^T,k$ such that $G_1$ holds (this also fixes $z$ and so $w_j,\hat{w}_j$ are well defined in terms of $\vec{p^S},\vec{p^T}$). 
Recall that we assumed that $\gamma\geq\frac{1}{25}$, and 
let $\Delta=\frac{1}{\gamma} \cdot \Theta\left(\sqrt{\frac{\ln\frac{d}{\delta}}{n}}\right)=\Theta\left(\sqrt{\frac{\ln\frac{d}{\delta}}{n}}  \right)$. Asserting that $d\geq\Omega(\ln\frac{n}{\delta})$,
by the Hoeffding bound we have that
$$\Pr\left[\sum_{j=1}^d \left( X^T_{1,j} - X^S_{1,j} \right) \leq \Delta\gamma d \right]\geq1-O\left(\frac{\delta}{n}\right),$$
as required.
Overall, Event $G_1$ happens with probability at least $1-O(\delta)$, and then Event $G_2$ happens with probability at least $1-O(\delta)$.

\medskip

Finally,
{\small 
\begin{align*}
&\Pr[R^H_{\gamma}(S,z)=1]
=\sum_{\substack{\vec{p},\vec{p}^{ST},\\\vec{p}^S,\vec{p}^T,k}}\Pr\left[
\begin{matrix}
\vec{p},\vec{p}^{ST},\\
\vec{p}^S,\vec{p}^T,k
\end{matrix}
\right]\Pr\left[R^H_{\gamma}(S,z)=1\left| \vec{p},\vec{p}^{ST},\vec{p}^S,\vec{p}^T,k\right.\right]
\\
&=\sum_{\substack{\vec{p},\vec{p}^{ST},\\\vec{p}^S,\vec{p}^T,k}}\Pr\left[\begin{matrix}
\vec{p},\vec{p}^{ST},\\
\vec{p}^S,\vec{p}^T,k
\end{matrix}\right]\Pr\left[\left\{ \sum_{j=1}^d X^S_{1,j} \leq\gamma d  \right\}\;\vee\;\dots\;\vee\; \left\{ \sum_{j=1}^d X^S_{n,j} \leq\gamma d \right\} \vee \text{Event ST}\right]\\
&\leq O(\delta)+\sum_{\substack{\vec{p},\vec{p}^{ST},\\\vec{p}^S,\vec{p}^T,k\\\text{s.t.\ }G_1\text{ holds}}}\Pr\left[\begin{matrix}
\vec{p},\vec{p}^{ST},\\
\vec{p}^S,\vec{p}^T,k
\end{matrix}\right]\Pr\left[\left\{ \sum_{j=1}^d X^S_{1,j} \leq\gamma d  \right\}\;\vee\;\dots\;\vee\; \left\{ \sum_{j=1}^d X^S_{n,j} \leq\gamma d \right\} 
\vee \text{Event ST}
\right]\\
&\leq O(\delta)+\sum_{\substack{\vec{p},\vec{p}^{ST},\\\vec{p}^S,\vec{p}^T,k\\\text{s.t.\ }G_1\text{ holds}}}\Pr\left[\begin{matrix}
\vec{p},\vec{p}^{ST},\\
\vec{p}^S,\vec{p}^T,k
\end{matrix}\right]\Pr\left[G_2\wedge\left(\left\{ \sum_{j=1}^d X^S_{1,j} \leq\gamma d  \right\}\;\vee\;\dots\;\vee\; \left\{ \sum_{j=1}^d X^S_{n,j} \leq\gamma d \right\} 
\vee \text{Event ST}
\right)\right]\\
&\leq O(\delta)+
\hspace{-15px}
\sum_{\substack{\vec{p},\vec{p}^{ST},\\\vec{p}^S,\vec{p}^T,k\\\text{s.t.\ }G_1\text{ holds}}}\Pr\left[\begin{matrix}
\vec{p},\vec{p}^{ST},\\
\vec{p}^S,\vec{p}^T,k
\end{matrix}\right]\Pr\left[G_2{\wedge}\left(\left\{ \sum_{j=1}^d X^T_{1,j} \leq(1{+}\Delta)\gamma d  \right\}{\vee}...{\vee} \left\{ \sum_{j=1}^d X^T_{n,j} \leq(1{+}\Delta)\gamma d \right\} 
{\vee} \text{Event ST}
\right)\right]\\
&\leq O(\delta)+\sum_{\substack{\vec{p},\vec{p}^{ST},\\\vec{p}^S,\vec{p}^T,k\\}}\Pr\left[\begin{matrix}
\vec{p},\vec{p}^{ST},\\
\vec{p}^S,\vec{p}^T,k
\end{matrix}\right]\Pr\left[\left\{ \sum_{j=1}^d X^T_{1,j} \leq(1{+}\Delta)\gamma d  \right\}\vee...\vee \left\{ \sum_{j=1}^d X^T_{n,j} \leq(1{+}\Delta)\gamma d \right\} 
\vee \text{Event ST}
\right]\\
&=O(\delta)+\sum_{\substack{\vec{p},\vec{p}^{ST},\\\vec{p}^S,\vec{p}^T,k}}\Pr\left[
\begin{matrix}
\vec{p},\vec{p}^{ST},\\
\vec{p}^S,\vec{p}^T,k
\end{matrix}
\right]\Pr\left[R^H_{(1{+}\Delta)\gamma}(T,z)=1\left| \vec{p},\vec{p}^{ST},\vec{p}^S,\vec{p}^T,k\right.\right]\\
&=O(\delta)+\Pr\left[R^H_{(1{+}\Delta)\gamma}(T,z)=1\right].
\end{align*}
}
\end{proof}

\section{Incorporating side-information into the definition}\label{sec:sideInfo}
In the definitions presented in the previous section, the attacker's knowledge (beyond the release $y$) is captured solely by the meta-distribution $\Prior$. However, it is reasonable to assume that the attacker might possess additional information: Either additional knowledge about the dataset $S$ or about the specific target distribution $\Nature$. For example, the attacker might know a point from $S$. 
In this section we extend our definition to support such side-information. 

\subsection{Side-information about the sample}

We start by examining one natural way to extend the definition to capture side-information about the sample $S$ (but not yet about the target distribution $\Nature$).

\begin{definition}\label{def:narcSideS}
Let $\XXX$ be a data domain and let $\Prior$ be a meta-distribution. That is, $\Prior$ is a distribution over distributions over datasets containing elements from $\XXX$. Let $R:\XXX^*\times\{0,1\}^*\rightarrow\{0,1\}$ be a reconstruction relation.
Algorithm $\MMM$ is {\em $(\eps,\delta,\Prior,R)$-\BN} if for every attacker $\AAA$ and every side information function $\SideInfo$ it holds that
\begin{equation}
\underset{\substack{\Nature\leftarrow\Prior\\S\leftarrow\Nature\\
K\leftarrow\SideInfo(S)\\
y\leftarrow\MMM(S)\\z\leftarrow\AAA(y,K)}}{\Pr}[R(S,z)=1]\leq e^{\eps}\cdot \underset{\substack{\Nature\leftarrow\Prior\\S\leftarrow\Nature\\
K\leftarrow\SideInfo(S)\\T\leftarrow\Nature|K\\y\leftarrow\MMM(S)\\z\leftarrow\AAA(y,K)}}{\Pr}[R(T,z)=1]+\delta. \label{eq:narcSideS}
\end{equation}
\end{definition}

In words, in the above definition we allow the attacker to have arbitrary side-information about the sample $S$, captured by the function $K\leftarrow\SideInfo(S)$. We then sample the ``fresh'' (baseline) dataset $T$ from the conditional distribution $\Nature|K$. To motivate this, suppose for example that the attacker knows the first row in $S$. So if $S=(x_1,\dots,x_n)$ then $K=x_1$. Then, when sampling $T\leftarrow\Nature|K$ we have that $x_1$ is also included in $T$ and therefore if the attacker outputs $x_1$ as the recovered element then this would not be considered a valid reconstruction. Indeed, in order to break Inequality~(\ref{eq:narcSideS}), and thus break the security of $\MMM$, the attacker's goal is to succeed in the left experiment while failing in the right experiment.

We now show that, unlike before, once we allow the attacker to have side-information about the sample, then the exact average is no longer secure w.r.t.\ Tardos-Prior. Specifically,

\begin{observation}
    The exact average is not secure (according to Definition~\ref{def:narcSideS}) for Tardos-Prior with the relation $R^H_0$ (i.e., when the attacker's goal is to identify an element from $S$).
\end{observation}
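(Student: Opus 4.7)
The plan is to exhibit a side-information function and an attacker that together break Inequality~(\ref{eq:narcSideS}). Define the side-information function $\SideInfo$ by $\SideInfo(S) = (s_1,\ldots,s_{n-1})$, i.e., all rows of $S$ except for the last one, and define the attacker $\AAA(y,K)$ to output $z := n\cdot y - \sum_{i=1}^{n-1}s_i$, where $y = \MMM(S) = \Avg(S)$.

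In the left-hand experiment of~(\ref{eq:narcSideS}), plugging in $y = \Avg(S) = \frac{1}{n}\sum_{i=1}^n s_i$ yields $z = s_n$ exactly, so $z \in S$ and $R^H_0(S,z) = 1$ deterministically; the LHS equals $1$. In the right-hand experiment the attacker receives the same $(y,K)$ (neither depends on $T$), hence still outputs $z = s_n$. Interpreting $T \leftarrow \Nature\mid K$ as sampling $T$ from $\Nature$ conditioned on $\SideInfo(T)=K$, the first $n-1$ rows of $T$ must coincide with those of $S$, while the last row $t_n$ is fresh. Under the Tardos-Prior, $t_n$ is uniform over $C\setminus\{s_1,\ldots,s_{n-1}\}$; for $d=\Omega(\log(N/\delta))$ the rows of $C$ are distinct with probability $1-O(\delta)$, so this set has size $N-n+1$ and $R^H_0(T,z)=1$ holds only when $t_n = s_n$, an event of conditional probability $1/(N-n+1)$. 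Hence the RHS of~(\ref{eq:narcSideS}) is at most $e^\eps/(N-n+1)+O(\delta)$.

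Taking $N$ sufficiently large compared to $e^\eps$ makes the required inequality $1 \leq e^\eps/(N-n+1)+O(\delta)$ fail for any $\delta$ bounded away from $1$, witnessing that the exact average is not $(\eps,\delta,\Tardos,R^H_0)$-Bayesian extraction-safe. The construction is direct and I do not anticipate any real obstacle; the conceptual point is that revealing $n-1$ rows as side information lets the attacker invert the linear map $S\mapsto \Avg(S)$ to exactly recover the withheld row $s_n$, while the conditioning in the baseline experiment forces $T$ to agree with $S$ on precisely those $n-1$ known rows, so its one remaining row overwhelmingly differs from $s_n$. The only subtlety worth noting is fixing the semantics of $T\leftarrow\Nature\mid K$, which I read as conditioning on $\SideInfo(T)=K$ in line with the motivating example following Definition~\ref{def:narcSideS}.
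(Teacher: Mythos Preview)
Your proposal is correct and follows essentially the same approach as the paper: both use $\SideInfo(S)=(s_1,\dots,s_{n-1})$ and have the attacker recover $s_n$ algebraically from the exact average, yielding LHS $=1$, then argue that in the baseline experiment $T$ agrees with $S$ on the first $n-1$ rows while its last row is fresh and hence equals $s_n$ only with probability $O(1/N)+2^{-\Omega(d)}$. Your bookkeeping for the RHS (conditioning on $C$ having distinct rows and getting exactly $1/(N-n+1)$) is if anything slightly cleaner than the paper's, but the argument is the same.
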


\begin{proof}
Let $\SideInfo$ be the function that takes a dataset $S=(x_1,\dots,x_n)$ and returns the first $n-1$ elements in it. Let $\AAA$ be the attacker that given $y=\average(S)$ and $K=\SideInfo(S)=(x_1,\dots,x_{n-1})$, returns $z=x_n$. That is, the attacker recovers the last element in $S$ (exactly) by subtracting $x_1,\dots,x_{n-1}$ from the unnormalized average. So,
$
\underset{\substack{\Nature\leftarrow\Prior\\S\leftarrow\Nature\\
K\leftarrow\SideInfo(S)\\
y\leftarrow\MMM(S)\\z\leftarrow\AAA(y,K)}}{\Pr}[z\in S]=1.
$

Now suppose that we resample $T\leftarrow\Nature|K$, and denote $T=(x'_1,\dots,x'_n)$. Note that by sampling from $\Nature|K$ we ensure that the first $n-1$ elements in $T$ are identical to their corresponding elements in $S$, while $x'_n$ is fresh. The event that $z=x_n\in T$ is exactly the event that $x_n\in\{x_1,\dots,x_{n-1},x'_n\}$, which is very unlikely under Tardos-Prior. Specifically, 
\begin{align*}
\underset{\substack{\Nature\leftarrow\Prior\\S\leftarrow\Nature\\
K\leftarrow\SideInfo(S)\\T\leftarrow\Nature|K\\y\leftarrow\MMM(S)\\z\leftarrow\AAA(y,K)}}{\Pr}[z\in T]
&=
\underset{\substack{\Nature\leftarrow\Prior\\(x_1,\dots,x_n,x'_n)\leftarrow\Nature}}{\Pr}[x_n\in \{x_1,\dots,x_{n-1},x'_n\}]\leq\frac{1}{N} + n\cdot 2^{-\Omega(d)}.
\end{align*}
To see this, recall that $S=(x_1,\dots,x_n)$ are sampled without repetitions from $\Nature$. Thus, $\Pr[x_n\in\{x_1,\dots,x_{n-1}\}]\leq (n-1)\cdot 2^{-\Omega(d)}$. As for $x'_n$, with probability $1/N$ we have that $x'_n$ and $x_n$ are the same element from $\Nature$, and otherwise they are equal with probability at most $2^{-\Omega(d)}$.
\end{proof}

\subsection{An alternative treatment for the attacker's side-information}

So the exact average is not secure w.r.t.\ Tardos-Prior when the attacker possesses enough side information. How about a noisy average? Consider the mechanism $\MMM$ that takes a dataset $S\in(\{0,1\}^d)^n$ and returns $y=\average(S)+\noise$. That is, $\MMM$ adds independent noise to each of the coordinates of the empirical average of $S$. Now consider an attacker with the following side-information: A pair of vectors $(k_1,k_2)\in(\{0,1\}^d)^2$ such that one of them is a random element from $S$ and the other is a random element from $\Nature$, but we do not know which is which. Nevertheless, given the release $y=\average(S)+\noise$ the attacker can identify w.h.p.\ who among $k_1,k_2$ belongs to $S$, and return it as its reconstructed element.\footnote{More specifically, the attacker picks $k_b$ that maximizes $\langle k_b,y \rangle$. See, e.g., \cite{DworkSSUV15}.} On the other hand, in the ``baseline experiment'', given this side information, each of $k_1,k_2$ is exactly equally likely to be in $T$. Thus, the attacker would fail to correctly guess an element of $T$ with probability $\gtrsim 1/2$.\footnote{
Recall that in Definition~\ref{def:narcSideS} we sample $T$ from $\Nature$ conditioned on $K$. In our case, this means that one of $k_1,k_2$ is included in $T$, each w.p. $1/2$. Our attacker guesses one out of $k_1,k_2$ which is very likely to be in $S$, but the probability of this element being in $T$ is still $\approx1/2$. 
This probability is slightly larger than 1/2 because even for the element which is sampled independently of $T$ there is still a tiny (but non-zero) probability that this element will appear in $T$.} 

So the noisy average is not secure w.r.t.\ Tardos-Prior when the attacker possesses enough side information. However, one could argue that the attack in this example should {\em not} be considered successful, because the result of the attack (the reconstructed element) was entirely encoded in the attacker's side-information. In some sense, the attacker extracted $z$ more from its side-information than from the release $y$. 

This discussion suggests that a different treatment for the attacker's side-information might be needed. More specifically, we want a way to {\em measure} the amount of information that $K$ contains on $z$, and require that this quantity is ``not too high'' in order for the attack to be considered successful. One could try to quantify this is via standard tools from information theory, such as the {\em mutual information} between $K$ and $z$, or the {\em conditional entropy} of $z$ given $K$. Recall, however, that these are both {\em average} notions, that do not necessarily have meaningful implications on the {\em specific instantiations} of $K$ and $z$. That is, it could be that the mutual information between $K$ and $z$ (as random variables) is very low, but the specific instantiation of $K$ actually tells us quite a bit about $z$.

Motivated by this discussion, we  use a ``point wise'' variant of entropy (and conditional entropy), known as {\em information content}, {\em self-information}, or {\em surprisal}.

\begin{definition}[\citet{shannon1948,CoverThomas06}]\label{def:surprisal}
Let $X$ be a random variable and let $x$ be an outcome. The {\em surprisal} of $x$ w.r.t.\ $X$ is defined as
$h(x)=-\log\left(\Pr[X=x]\right)$. 
Given another random variable $Y$ and an outcome $y$, we also write
$h(x|y)=-\log\left(\Pr[X=x|Y=y]\right)$.
\end{definition}

Intuitively, this quantity measures the amount of ``surprise'' or ``information'' gained upon learning that the event $X=x$ has occurred: rare events (with small probability) carry more information, while common events carry less.

We now leverage this notion in order to address the relation between the attacker's side-information $K$ and the reconstructed element $z$ in Definition~\ref{def:narcSideS}. To simplify the presentation, we begin with a variant of our definition that is tailored to ``exact reconstruction'', where the attacker's goal is to recover an element from the sample exactly. We will later relax this assumption.

\begin{definition}\label{def:narcSideI}
Let $\XXX$ be a data domain and let $\Prior$ be a meta-distribution. That is, $\Prior$ is a distribution over distributions over datasets containing elements from $\XXX$.
Algorithm $\MMM$ is {\em $(\eps,\delta,\xi,\Prior)$-\BN} if for every attacker $\AAA$ and every side information function $\SideInfo$ it holds that
\begin{align}
\hspace{-10px}\Pr_{\substack{\Nature\leftarrow\Prior\\
S\leftarrow \Nature\\
K\leftarrow\SideInfo(\Nature,S)\\
y\leftarrow\MMM(S)\\
z\leftarrow \AAA(y,K)}}\hspace{-10px}\left[
\begin{array}{c}
z\in S, \quad \text{ and} \\[0.2em]
h\left(z|K,S{\setminus}\{z\}\right)\geq\xi
\cdot h\left(z|S{\setminus}\{z\}\right)
\end{array}
\right]
%
\leq
e^\epsilon \cdot \hspace{-15px}\Pr_{\substack{\Nature\leftarrow\Prior\\
S\leftarrow \Nature\\
K\leftarrow\SideInfo(\Nature,S)\\
T\leftarrow \Nature|K\\
y\leftarrow\MMM(S)\\
z\leftarrow \AAA(y,K)}}\hspace{-10px}\Big[z\in T\Big]+\delta,\label{eq:sideI}
\end{align}
where for a fixed element $z\in\XXX$, a fixed dataset $\hat{S}\in\XXX^{n-1}$, and a fixed value  $\hat{K}$ we have
$$h(z|\hat{K},\hat{S})\triangleq
  -
\log\left(\Pr_{\substack{\Nature\leftarrow\Prior\\
S\leftarrow \Nature\\
K\leftarrow\SideInfo(\Nature,S)\\
v\leftarrow S\\
S':=S\setminus\{v\}
}}\left[ v=z \left|
(K,S')=\left(\hat{K},\hat{S}\right)
\right.\right]\right) .$$ 
\end{definition}

In words, for the attacker $\AAA$ to be successful, it must identify an element $z \in S$ such that: (1) the probability that $z \in T$ for a fresh dataset $T$ is relatively small; and (2) the side-information $K$ does not reveal too much about $z$ beyond what all other elements in $S$ might reveal.

\begin{remark}
There are many variants of Definition~\ref{def:narcSideI} which seem to capture similar intuition. For example,
\begin{enumerate}
\item  In the definition above we chose to compare
$h(z|K,S{\setminus}\{z\})$ with
$h(z|S\setminus\{z\})$. This choice is somewhat arbitrary. One could definitely consider other expressions to compare between. Additionally, one could consider other conditioning instead of $(K,S\setminus\{z\})$ and $S\setminus\{z\}$. For example, one could consider $h(z|K,D)$ for a dataset $D$ sampled independently from $S$, possibly of different size.

\item One could require the event $\left\{h(z|K,S{\setminus}\{z\})\geq\xi
\cdot h(z|S\setminus\{z\})\right\}$ to hold in both sides of Inequality~\ref{eq:sideI}, rather than just on the left hand side (possibly with a different value for $\xi$ in the right hand side).

\end{enumerate}
\end{remark}

We next present a secure mechanism w.r.t.\ Definition~\ref{def:narcSideI} and Tardos-Prior. Specifically, we show that this can be achieved using {\em differential privacy (DP)} \cite{DMNS06}. Crucially, this only requires guaranteeing {\em per attribute DP}. That is, in our construction we do not need to pay in DP composition across the $d$ features. As a result, the size of the required dataset does not need to scale polynomially with $d$. This is in sharp contrast to known lower bound from the privacy literature where it is known that in order to resist membership inference attacks, under the same Tardos-Prior, the size of the dataset must grow polynomially with $d$. This demonstrates that, in some cases, protecting against reconstruction attacks can be significantly easier than protecting against privacy attacks.

Let us first  introduce some additional preliminaries from the literature on differential privacy. Informally, an algorithm that analyzes data satisfies differential privacy if it is robust in the sense that its
outcome distribution does not depend ``too much'' on any single data point. Formally,

\begin{definition}[\cite{DMNS06}]
Let $\MMM:\XXX^*\rightarrow Y$ be a randomized algorithm whose input is a dataset $S\in \XXX^*$. Algorithm $\AAA$ is {\em $(\eps,\delta)$-differentially private (DP)} if for any two datasets $S,S'$ that differ on one point (such datasets are called {\em neighboring}) and for any outcome set $F\subseteq Y$ it holds that
$
\Pr[\MMM(S)\in F]\leq e^{\eps}\cdot\Pr[\MMM(S')\in F]+\delta.
$
\end{definition}

One of the most basic techniques for designing differentially private algorithms is by 
injecting (carefully calibrated) random noise into the computation that obscures the effect of every single element in the dataset. In particular,

\begin{theorem}[\citet{DMNS06}]\label{thm:Lap}
Let $\eps>0$, and let $\MMM_{\eps}$ be the mechanism that on input $S=(x_1,\dots,x_n)\in\left([0,1]^d\right)^n$ adds independently generated noise with distribution $\Lap(\frac{1}{\eps n})$ to each of the $d$ coordinates of $\average(S)=\frac{1}{n}\sum_i x_i$. Then  $\MMM_{\eps}$ is $(\eps d,0)$-differentially private.
\end{theorem}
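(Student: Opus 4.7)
The plan is to reduce this to the standard guarantee of the Laplace mechanism for a single scalar query, together with basic composition across the $d$ coordinates. The key observation is that for any two neighboring datasets $S,S'\in([0,1]^d)^n$ (differing on exactly one row) and any coordinate $j\in[d]$, the $j$-th coordinate of the empirical average satisfies
\[
\bigl|\Avg(S)[j]-\Avg(S')[j]\bigr|\;\leq\;\tfrac{1}{n},
\]
since each entry lies in $[0,1]$ and the two datasets differ on a single row. Thus the sensitivity of each coordinate-wise query $S\mapsto \Avg(S)[j]$ is at most $1/n$.

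Next, I would invoke the standard analysis of the Laplace mechanism: adding noise drawn from $\Lap(1/(\eps n))$ to a query of sensitivity $1/n$ yields an $(\eps,0)$-differentially private release. Concretely, for any fixed $j$, the density ratio of the noisy $j$-th coordinate under $S$ versus $S'$ is bounded pointwise by $\exp(\eps n\cdot|\Avg(S)[j]-\Avg(S')[j]|)\leq e^{\eps}$, because the Laplace density has the multiplicative-stability property $f(t)/f(t+\Delta)\leq e^{\eps n|\Delta|}$.

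Finally, since the noise on the $d$ coordinates is independent, the full output of $\MMM_\eps$ is a product of $d$ independently released coordinates, each of which is $(\eps,0)$-DP. Basic (sequential) composition for differential privacy then yields that $\MMM_\eps$ is $(\eps d,0)$-DP, which is the claim. There is essentially no obstacle here; the only thing to double-check is that the composition step applies because the noises are independent across coordinates (so each coordinate can be viewed as a separate mechanism whose output depends on $S$ only through a single sensitivity-$1/n$ query), which is exactly the setting in which basic composition applies cleanly without needing group-privacy or more advanced accounting.
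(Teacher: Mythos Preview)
Your proposal is correct and is the standard argument for the Laplace mechanism. The paper itself does not prove this theorem; it is simply quoted as a known result from \citet{DMNS06}, so there is no paper proof to compare against. One minor remark: instead of invoking basic composition across the $d$ coordinates, you could equivalently bound the $\ell_1$-sensitivity of $\average(\cdot)$ by $d/n$ and apply the multivariate Laplace-mechanism bound directly, but the two arguments are interchangeable here.
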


We are now ready to present a secure mechanism w.r.t.\ Definition~\ref{def:narcSideI}. Specifically, we show that algorithm $\MMM_{\eps}$ (returning the noisy average) is secure.

\begin{lemma}\label{lem:DPsecure}
Let $\delta=2^{-\Omega(d)}$, constant $\xi>0$, and $\eps=0$.
Then for $\hat{\eps}= O(1)$ (small enough constant) it holds that $\MMM_{\hat{\eps}}$ is secure (in the sense of Definition~\ref{def:narcSideI}) for Tardos-Prior with parameters $\eps,\delta,\xi$.
\end{lemma}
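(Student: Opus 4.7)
The plan is to combine the Bayesian interpretation of row-level DP with the large predictive entropy of Tardos-Prior conditional on $n-1$ rows. Although the noise $\Lap(1/(\hat\eps n))$ yields only per-attribute $\hat\eps$-DP, it composes across the $d$ (independent) coordinates to give $(\hat\eps d)$-DP for row-level neighbors; since $\hat\eps$ is a constant, $e^{\hat\eps d}=2^{O(d)}$. The main insight is that this $2^{O(d)}$ blow-up in the posterior of any single row is more than compensated for by the $2^{-\Omega(d)}$ predictive probability that Tardos-Prior assigns to any specific row given the other $n-1$, once the surprisal condition is enforced.

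First I would fix an attacker $\AAA$ (WLOG deterministic) and a side-info function $\SideInfo$, and use the exchangeability of the rows of $S$ under Tardos-Prior to reduce the LHS of Inequality~(\ref{eq:sideI}) to $n\cdot\Pr\bigl[\AAA(y,K){=}x_1\wedge\text{surprisal holds for }x_1\bigr]$. Conditioning on $(y,K,S_{-1})$ with $S_{-1}=(x_2,\ldots,x_n)$ and writing $z^{*}:=\AAA(y,K)$, the inner event's probability is the posterior $\Pr[x_1{=}z^{*}\mid y,K,S_{-1}]$ restricted to the (random, $z^{*}$-dependent) surprisal event. The key Bayesian step is
\[
\Pr[x_1{=}z^{*}\mid y,K,S_{-1}]\;\leq\;e^{\hat\eps d}\cdot\Pr[x_1{=}z^{*}\mid K,S_{-1}],
\]
which follows from Bayes' rule combined with the $(\hat\eps d)$-DP of $\MMM$: writing the posterior ratio as a likelihood ratio, noting that $y$ depends only on $S$ (so $\Pr[y\mid x_1,K,S_{-1}]=\Pr[y\mid x_1,S_{-1}]$), and applying DP pointwise to the $x_1$-dependence of $\Pr[y\mid S]$, both the numerator and the denominator integrate cleanly and the $e^{\hat\eps d}$ factor emerges without any assumption on the prior over $\Nature$.

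By the exchangeability of the rows of $S$, the factor $\Pr[x_1{=}z^{*}\mid K,S_{-1}]$ equals $\Pr[v{=}z^{*}\mid K,S{\setminus}\{v\}{=}S_{-1}]=2^{-h(z^{*}\mid K,S_{-1})}$ in the notation of Definition~\ref{def:narcSideI}. On the event that the surprisal condition holds, this is at most $2^{-\xi h(z^{*}\mid S_{-1})}$. The remaining ingredient is a uniform-in-$z^{*}$ lower bound on $h(z^{*}\mid S_{-1})$ under Tardos-Prior. Integrating $\vec p$ against the $\text{Beta}(k_j+1,n-k_j)$ posterior induced by the column sums $k_j$ of $S_{-1}$ yields
\[
\Pr[v{=}z^{*}\mid S_{-1}]\;\leq\;\prod_{j=1}^{d}\frac{(k_j+1)^{z^{*}_j}(n-k_j)^{1-z^{*}_j}}{n+1}\;\leq\;(3/4)^{L},
\]
where $L$ is the number of ``luke-warm'' columns with $k_j\in[n/4,3n/4]$. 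A two-stage Chernoff argument (first on $p_j\sim U[0,1]$, then on $k_j\mid p_j$) gives $L\geq d/2$ with probability $1-2^{-\Omega(d)}$, so $h(z^{*}\mid S_{-1})\geq\Omega(d)$ uniformly in $z^{*}$ on the typical event.

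Putting the ingredients together, on the typical event for $S_{-1}$ the surprisal-restricted posterior is at most $e^{\hat\eps d}\cdot 2^{-\xi\Omega(d)}=2^{-\Omega(d)}$ once $\hat\eps$ is chosen small enough that $\hat\eps/\ln 2 < \xi\cdot\Omega(1)$; off the typical event the integrand is at most $1$ but the event itself has probability $2^{-\Omega(d)}$. Summing over the $n$ rows and absorbing into $\delta=2^{-\Omega(d)}$ completes the proof since $\eps=0$ and the RHS of Inequality~(\ref{eq:sideI}) is non-negative. The main obstacle I anticipate is the uniformity in $z^{*}$ of the entropy lower bound: because $z^{*}$ is chosen adversarially as a function of $(y,K)$, it is essential that $h(z^{*}\mid S_{-1})\geq\Omega(d)$ hold for \emph{every} $z^{*}$ on a typical $S_{-1}$, and not just on average—which is exactly what the Beta-posterior calculation above delivers.
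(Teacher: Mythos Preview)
Your proposal is correct and follows essentially the same approach as the paper's proof: both combine the $(\hat\eps d)$-DP of the Laplace mechanism with the $\Omega(d)$ surprisal lower bound (via lukewarm columns and the Beta posterior / Laplace rule of succession) to show that the LHS of Inequality~(\ref{eq:sideI}) is at most $n\cdot 2^{-\Omega(d)}$. The only cosmetic difference is that you phrase the DP step as a Bayesian posterior-ratio bound $\Pr[x_1{=}z^{*}\mid y,K,S_{-1}]\leq e^{\hat\eps d}\Pr[x_1{=}z^{*}\mid K,S_{-1}]$, whereas the paper swaps the input of $\MMM$ from $S$ to $S\setminus\{v\}$ so that $z$ becomes conditionally independent of $v$ given $(K,S\setminus\{v\})$---these are equivalent formulations of the same argument.
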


To prove this lemma, we need the following claim.

\begin{claim}\label{claim:luke-h}
Let $\hat{S}\in\XXX^{n-1}$ with at least $d/5$ lukewarm columns, and let $z\in\XXX$. Then for Tardos-Prior we have that
$$
h(z|\hat{S}):
=-\log\left(\Pr_{\substack{\Nature\leftarrow\Prior\\
S\leftarrow \Nature\\
v\leftarrow S\\
S':=S\setminus\{v\}}}[v=z|S'=\hat{S}]\right)=\Omega(d).
$$ 
Recall that a lukewarm column is a column whose Hamming weight is between $n/4$ and $3n/4$.
\end{claim}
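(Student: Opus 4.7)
The plan is to compute the posterior $\Pr[v=z \mid S'=\hat S]$ explicitly under Tardos-Prior and show that it is at most $2^{-\Omega(d)}$, which immediately gives $h(z\mid\hat S)=\Omega(d)$.

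First I would unpack the sampling under Tardos-Prior. Conditioned on $\vec p$, the rows of $C$ are iid $\Ber(\vec p)$, and $S$ is a uniform size-$n$ subset of $C$; equivalently, after a uniformly random permutation of $C$, we can take $S=(s_1,\dots,s_n)$ to be the first $n$ rows, which are themselves iid $\Ber(\vec p)$ given $\vec p$. Since $v$ is sampled uniformly from $S$ and the $s_i$ are exchangeable, the conditional distribution of $v$ given $S\setminus\{v\}=\hat S$ coincides with the conditional distribution of $s_n$ given $(s_1,\dots,s_{n-1})$ set to any fixed ordering of $\hat S$.

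Next, I would carry out the standard Beta--Bernoulli posterior computation on $\vec p$. Under the uniform prior on each $p_j$ and $n-1$ Bernoulli observations with column sum $c_j$, the posterior on $p_j$ is $\mathrm{Beta}(c_j+1,n-c_j)$ with mean $\hat p_j=(c_j+1)/(n+1)$, and posterior coordinates remain independent. Marginalizing the likelihood of $s_n\sim\Ber(\vec p)$ against this posterior yields
\begin{equation*}
\Pr[v=z\mid\hat S]=\prod_{j=1}^d \hat p_j^{\,z_j}(1-\hat p_j)^{1-z_j}.
\end{equation*}
I then restrict the product to lukewarm coordinates: for any $j$ with $c_j\in[n/4,3n/4]$ we have $\hat p_j\in[(n/4+1)/(n+1),\,(3n/4+1)/(n+1)]$, so $\max(\hat p_j,1-\hat p_j)\le 3/4+O(1/n)$, which is bounded above by a constant strictly less than $1$ (say $7/8$, once $n$ exceeds some absolute constant). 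Multiplying this bound over the at-least-$d/5$ lukewarm coordinates gives $\Pr[v=z\mid\hat S]\le(7/8)^{d/5}=2^{-\Omega(d)}$, and taking $-\log$ finishes the proof.

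The only genuinely non-routine step is justifying the reduction to iid Bernoullis despite the ``sampling without repetition'' in step~\ref{step:UnifTardos} of Tardos-Prior. This is the potential pitfall, but it is resolved by the observation above: uniformly selecting $n$ rows from a fully exchangeable (iid) ensemble yields iid samples, so the whole calculation proceeds without any tracking of $N$ or of the no-repetition constraint, and the remainder is a routine Beta--Bernoulli computation.
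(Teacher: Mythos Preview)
Your proposal is correct and follows essentially the same route as the paper's proof: both reduce the Tardos-Prior sampling to $n$ iid $\Ber(\vec p)$ draws via exchangeability, factor the posterior predictive over coordinates, invoke the Beta--Bernoulli conjugacy (the paper phrases this as ``Laplace's rule of succession''), and then bound each lukewarm factor by a constant strictly below~$1$ to obtain $2^{-\Omega(d)}$. Your treatment of the without-repetition subtlety and the explicit posterior-mean computation are slightly more detailed than the paper's, but the argument is the same.
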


We first prove Lemma~\ref{lem:DPsecure}. The proof of Claim~\ref{claim:luke-h} will follow.

\begin{proof}[Proof of Lemma~\ref{lem:DPsecure}]
Let $\Nature\leftarrow\Prior$ and let $S\leftarrow\Nature$. As in the proof of Lemma~\ref{lem:averageVanilla}, by the Chernoff bound, for $N\geq n=\Omega(1)$, with probability at least $1-2^{-\Omega(d)}$, there are at least $d/5$ \emph{lukewarm} columns in $S$, i.e.\ columns of Hamming weight  between $n/4$ and $3n/4$. %
By Claim~\ref{claim:luke-h}, for every such $S$ and every $z\in\XXX$ we have 
$h(z|S)= \Omega(d)$. 
We calculate 
\begin{align*}
&\Pr_{\substack{\Nature\leftarrow\Prior\\
S\leftarrow \Nature\\
K\leftarrow\SideInfo(\Nature,S)\\z\leftarrow\AAA(\MMM(S),K)}}[z\in S \text{ and } h(z|K,S{\setminus}\{z\})\geq\xi\cdot h(z|S\setminus\{z\})]\\
&\qquad\stackrel[\gray{(*1)}]{}{\leq} 2^{-\Omega(d)}+\Pr_{\substack{\Nature\leftarrow\Prior\\
S\leftarrow \Nature\\
K\leftarrow\SideInfo(\Nature,S)\\z\leftarrow\AAA(\MMM(S),K)}}[z\in S \text{ and } h(z|K,S{\setminus}\{z\})\geq\Omega(d)]\\
&\qquad  \stackrel[\gray{(*2)}]{}{\leq} 2^{-\Omega(d)}+n\cdot\hspace{-10px}\Pr_{\substack{\Nature\leftarrow\Prior\\
S\leftarrow \Nature\\
v\leftarrow S\\
K\leftarrow\SideInfo(\Nature,S)\\
z\leftarrow\AAA(\MMM(S),K)}}[z= v \text{ and } h(z|K,S{\setminus}\{v\})\geq\Omega(d)]\\
&\qquad \stackrel[\gray{(*3)}]{}{\leq} 2^{-\Omega(d)}+n\cdot e^{\hat{\eps} d}\cdot\hspace{-10px}\Pr_{\substack{\Nature\leftarrow\Prior\\
S\leftarrow \Nature\\
v\leftarrow S\\
K\leftarrow\SideInfo(\Nature,S)\\
z\leftarrow\AAA(\MMM(S\setminus\{v\}),K)}}[z= v \text{ and } h(z|K,S{\setminus}\{v\})\geq\Omega(d)]\\
&\qquad
\stackrel[\gray{(*4)}]{}{=}
2^{-\Omega(d)}+n\cdot e^{\hat{\eps} d}\cdot\hspace{-10px}
\sum_{\substack{K,z,S{\setminus}\{v\}:\\ h(z|K,S{\setminus}\{v\})\geq\Omega(d)}}\Pr[K,z,S{\setminus}\{v\}]
\cdot\Pr_{v}[v=z | K,z,S{\setminus}\{v\} ]\\
&\qquad  \stackrel[\gray{(*5)}]{}{=}    2^{-\Omega(d)}+n\cdot e^{\hat{\eps} d}\cdot\hspace{-10px}
\sum_{\substack{K,z,S{\setminus}\{v\}:\\ h(z|K,S{\setminus}\{v\})\geq\Omega(d)}}\Pr[K,z,S{\setminus}\{v\}]
\cdot\Pr_{v}[v=z | K,S{\setminus}\{v\} ]\\
&\qquad \stackrel[\gray{(*6)}]{}{\leq} 2^{-\Omega(d)}+n\cdot e^{\hat{\eps} d}\cdot\hspace{-10px}
\sum_{\substack{K,z,S{\setminus}\{v\}:\\ h(z|K,S{\setminus}\{v\})\geq\Omega(d)}}\Pr[K,z,S{\setminus}\{S_{\ell}\}]
\cdot2^{-\Omega(d)}\\
&\qquad \leq2^{-\Omega(d)}+ n \cdot e^{\hat{\eps} d} \cdot 2^{-\Omega(d)}\\
&\qquad \stackrel[\gray{(*7)}]{}{\leq} n\cdot 2^{-\Omega(d)}.
\end{align*}
Here $\gray{(*1)}$ follows from Claim~\ref{claim:luke-h}. For $\gray{(*2)}$, note that $\Pr[z=v]=\Pr[z\in S]\cdot\Pr[z=v|z\in S]\geq\frac{1}{n}\Pr[z\in S]$. Inequality $\gray{(*3)}$ follows from the fact that $\MMM$ is $(\hat{\eps} d,0)$-differentially private. Specifically, instead of applying $\MMM$ to $S$ we apply it to $S\setminus\{v\}$.
Equality $\gray{(*4)}$ follows from the law of total probability. 
Equality $\gray{(*5)}$ is true because $z$ is a function (only) of $K$ and $S\setminus\{v\}$.
Inequality $\gray{(*6)}$ holds as for every $K,z,S\setminus\{v\}$ in this summation we have that $h(z|K,S{\setminus}\{v\})\geq\Omega(d)$, and thus $\Pr_{v}[v=z | K,S{\setminus}\{v\} ]\leq 2^{-\Omega(d)}$.
Finally, Inequality $\gray{(*7)}$ holds for small enough $\hat{\eps}\leq O(1)$.
\end{proof}

In remains to prove Claim~\ref{claim:luke-h}.

\begin{proof}[Proof of Claim~\ref{claim:luke-h}]

Fix $\hat{S}\in\XXX^{n-1}$ with $d/5$ lukewarm columns and fix $z\in\XXX$. We use $\hat{S}[j]$ to denote the $j$th column of $\hat{S}$. 
Also, given $\vec{p}=(p_1,\dots,p_d)$ we write $\Ber(\vec{p})$ to denote the distribution over $d$-bit vectors, where the $j$th bit is sampled from $\Ber(p_j)$. We have that
\begin{align*}
&\Pr_{\substack{\Nature\leftarrow\Prior\\
S\leftarrow \Nature\\
v\leftarrow S\\
S':=S\setminus\{v\}
}}\left[ v=z \left|S'=\hat{S}\right.\right]
\stackrel[\gray{(*1)}]{}{=} \Pr_{\substack{
\vec{p}\leftarrow[0,1]^d\\
S'\leftarrow\Ber(\vec{p})^{n-1}\\
v\leftarrow\Ber(\vec{p})
}}\left[ v=z \left|S'=\hat{S}\right.\right]
\\
&\qquad= \prod_{j\in[d]}\Pr_{\substack{
p_j\leftarrow[0,1]\\
S'[j]\leftarrow\Ber(p_j)^{n-1}\\
v_j\leftarrow\Ber(p_j)
}}\left[ v_j=z_j \left| S'[j]=\hat{S}[j] \right.\right]\\
&\qquad= \prod_{\substack{j\in[d]:\\z_j=1}}\Pr_{\substack{
p_j\leftarrow[0,1]\\
S'[j]\leftarrow\Ber(p_j)^{n-1}\\
v_j\leftarrow\Ber(p_j)
}}\left[ v_j=1 \left| S'[j]=\hat{S}[j] \right.\right]
\cdot \prod_{\substack{j\in[d]:\\z_j=0}}\Pr_{\substack{
p_j\leftarrow[0,1]\\
S'[j]\leftarrow\Ber(p_j)^{n-1}\\
v_j\leftarrow\Ber(p_j)
}}\left[ v_j=0 \left| S'[j]=\hat{S}[j] \right.\right]
\\
&\qquad=\prod_{\substack{j\in[d]:\\z_j=1}}\E_{\substack{
p_j\leftarrow[0,1]\\
S'[j]\leftarrow\Ber(p_j)^{n-1}
}}\left[ p_j \left| S'[j]=\hat{S}[j] \right.\right]
\cdot
\prod_{\substack{j\in[d]:\\z_j=0}}\left(1-\E_{\substack{
p_j\leftarrow[0,1]\\
S'[j]\leftarrow\Ber(p_j)^{n-1}
}}\left[ p_j \left| S'[j]=\hat{S}[j] \right.\right]\right)\\
&\qquad\stackrel[\gray{(*2)}]{}{\leq} 2^{-\Omega(d)}.
\end{align*} 
Here $\gray{(*1)}$ follows from the assumption that $S$ is sampled without repetitions from $\Nature$. Thus, instead of sampling $S$ of size $n$ from $\Nature$, then sampling $v\leftarrow\Nature$, and then setting $S':=S\setminus\{v\}$, we could simply sample $S',v$ directly from $\Ber(\vec{p})$.
Finally, $\gray{(*2)}$ follows from the assumption that there are $\Omega(d)$ lukewarm columns in $\hat{S}$ and by applying Laplace's rule of succession (derived using the properties of the Beta distribution, see e.g.,  \citep{jaynes2003probability}). 

\end{proof}

\subsection{Extending Definition~\ref{def:narcSideI} beyond perfect reconstruction}

In Definition~\ref{def:narcSideI} we focused on {\em exact recovery}, i.e., on attackers that aim to perfectly recover an element $z$ such that $z\in S$. This was convenient because it allowed us to argue about the {\em surprisal} of this recovered element $z$. 
In this subsection we extend Definition~\ref{def:narcSideI} beyond exact recovery. We consider a restricted family of relations $R$ in which the attacker's goal is to reconstruct (part of) a row from $S$, unlike general relations which possibly capture global information about $S$.  We refer to these types of attacks as {\em targeted} reconstruction or {\em targeted} relation. Note that $R^H_{\gamma}$ is an example of a targeted relation. On the other hand, the relation $R$ that given $S,z$ returns 1 iff $z$ specifies the first 3 bits of every point in $S$ is {\em not} a targeted relation.

\begin{definition}
A {\em targeted} reconstruction relation is a function $R:\XXX\times\{0,1\}^*\rightarrow\{0,1\}$. For a dataset $S\in\XXX^n$ we abuse notation and write $R(S,z)=1$ iff $\exists x\in S$ such that $R(x,z)=1$.
\end{definition}

\begin{definition}\label{def:narcSideIext}
Let $\XXX$ be a data domain and let $\Prior$ be a meta-distribution. That is, $\Prior$ is a distribution over distributions over datasets containing elements from $\XXX$. Let $R$ and $\hat{R}$ be reconstruction relations where $R$ is {\em targeted}.
Algorithm $\MMM$ is {\em $(\eps,\delta,\xi,\Prior,R,\hat{R})$-\BN} if for every attacker $\AAA$ and every side information function $\SideInfo$ it holds that
\begin{align*}
\Pr_{\substack{\Nature\leftarrow\Prior\\
S\leftarrow \Nature\\
K\leftarrow\SideInfo(\Nature,S)\\
y\leftarrow\MMM(S)\\
z\leftarrow \AAA(y,K)}}\hspace{-10px}\left[
\begin{array}{c}
\exists x\in S \text{ s.t. } R(x,z)=1 \text{ and}\\
h(x|K,S{\setminus}\{x\})\geq\xi\cdot h(x|S{\setminus}\{x\})
\end{array}
\right]
\leq
e^\epsilon \cdot \hspace{-10px}\Pr_{\substack{\Nature\leftarrow\Prior\\
S\leftarrow \Nature\\
K\leftarrow\SideInfo(\Nature,S)\\
T\leftarrow \Nature|K\\
y\leftarrow\MMM(S)\\
z\leftarrow \AAA(y,K)}}\hspace{-10px}[\hat{R}(T,z)=1]+\delta
\end{align*}
\end{definition}

\bibliographystyle{plainnat}

\end{document}